\tikzstyle{env}=[copoint,regular polygon rotate=0,minimum width=0.2cm, fill=black]
\tikzstyle{every picture}=[baseline=-0.25em]
\tikzstyle{dotpic}=[scale=0.5]
\tikzstyle{diredges}=[every to/.style={diredge}]
\tikzstyle{dot graph}=[shorten <=-0.1mm,shorten >=-0.1mm,scale=0.6]
\tikzstyle{plot point}=[circle,fill=black,minimum width=2mm,inner sep=0]
\tikzstyle{braceedge}=[decorate,decoration={brace,amplitude=2mm,raise=-1mm}]
\tikzstyle{small braceedge}=[decorate,decoration={brace,amplitude=1mm,raise=-1mm}]
\tikzstyle{left hook arrow}=[left hook-latex]
\tikzstyle{right hook arrow}=[right hook-latex]
\tikzstyle{black dot}=[inner sep=0.7mm,minimum width=0pt,minimum height=0pt,fill=black,draw=black,shape=circle]
\tikzstyle{dot}=[black dot]
\tikzstyle{smalldot}=[inner sep=0.4mm,minimum width=0pt,minimum height=0pt,fill=black,draw=black,shape=circle]
\tikzstyle{white dot}=[dot,fill=white]
\tikzstyle{antipode}=[white dot,inner sep=0.3mm,font=\footnotesize]
\tikzstyle{smallwhitedot}=[smalldot,fill=white]
\tikzstyle{alt white dot}=[white dot,label={[xshift=3.07mm,yshift=-0.05mm,font=\footnotesize]left:$*$}]
\tikzstyle{gray dot}=[dot,fill=gray!40!white]
\tikzstyle{smallgraydot}=[smalldot,fill=gray!40!white]
\tikzstyle{box vertex}=[draw=black,rectangle]
\tikzstyle{small box}=[box vertex,fill=white]
\tikzstyle{whitebg}=[fill=white,inner sep=2pt]
\tikzstyle{graph state vertex}=[sg vertex,fill=black]
\tikzstyle{wide copoint}=[fill=white,draw=black,shape=isosceles triangle,shape border rotate=90,isosceles triangle stretches=true,inner sep=1pt,minimum width=1.5cm,minimum height=5mm]
\tikzstyle{wide point}=[fill=white,draw=black,shape=isosceles triangle,shape border rotate=-90,isosceles triangle stretches=true,inner sep=1pt,minimum width=1.5cm,minimum height=4mm]
\tikzstyle{very wide copoint}=[fill=white,draw=black,shape=isosceles triangle,shape border rotate=-90,isosceles triangle stretches=true,inner sep=1pt,minimum width=2.5cm,minimum height=4mm]
\tikzstyle{very wide empty copoint}=[draw=black,shape=isosceles triangle,shape border rotate=-90,isosceles triangle stretches=true,inner sep=1pt,minimum width=2.5cm,minimum height=4mm]
\tikzstyle{symm}=[ultra thick,shorten <=-1mm,shorten >=-1mm]
\tikzstyle{square box}=[rectangle,fill=white,draw=black,minimum height=5mm,minimum width=5mm,font=\small]
\tikzstyle{square gray box}=[rectangle,fill=gray!30,draw=black,minimum height=6mm,minimum width=6mm]
\tikzstyle{copoint}=[regular polygon,regular polygon sides=3,draw=black,scale=0.75,inner sep=-0.5pt,minimum width=7mm,fill=white]
\tikzstyle{point}=[regular polygon,regular polygon sides=3,draw=black,scale=0.75,inner sep=-0.5pt,minimum width=7mm,fill=white,regular polygon rotate=180]
\tikzstyle{gray point}=[point,fill=gray!40!white]
\tikzstyle{gray copoint}=[copoint,fill=gray!40!white]
\newcommand{\edgearrow}{{\arrow[black]{>}}}
\newcommand{\edgetick}{{\arrow[black,scale=0.7,very thick]{|}}}
\tikzstyle{diredge}=[->]
\tikzstyle{rdiredge}=[<-]
\tikzstyle{medium diredge}=[->]
\tikzstyle{short diredge}=[->]
\tikzstyle{halfedge}=[-)]
\tikzstyle{other halfedge}=[(-]
\tikzstyle{freeedge}=[(-)]
\tikzstyle{white edge}=[line width=5pt,white]
\tikzstyle{tick}=[postaction=decorate,decoration={markings, mark=at position 0.5 with \edgetick}]
\tikzstyle{small map edge}=[|-latex, gray!60!blue, shorten <=0.9mm, shorten >=0.5mm]
\tikzstyle{thick dashed edge}=[very thick,dashed,gray!40]
\tikzstyle{map edge}=[|-latex,very thick, gray!40, shorten <=1mm, shorten >=0.5mm]
\tikzstyle{tickedge}=[postaction=decorate,
\tikzstyle{dirtickedge}=[postaction=decorate,
\tikzstyle{dirdoubletickedge}=[postaction=decorate,
\newcommand{\boxshape}[3]{%
\pgfdeclareshape{#1}{
\inheritsavedanchors[from=rectangle] 
\inheritanchorborder[from=rectangle]
\inheritanchor[from=rectangle]{center}
\inheritanchor[from=rectangle]{north}
\inheritanchor[from=rectangle]{south}
\inheritanchor[from=rectangle]{west}
\inheritanchor[from=rectangle]{east}
\backgroundpath{
\southwest \pgf@xa=\pgf@x \pgf@ya=\pgf@y
\northeast \pgf@xb=\pgf@x \pgf@yb=\pgf@y

\@tempdima=#2
\@tempdimb=#3

\pgfpathmoveto{\pgfpoint{\pgf@xa - 5pt + \@tempdima}{\pgf@ya}}
\pgfpathlineto{\pgfpoint{\pgf@xa - 5pt - \@tempdima}{\pgf@yb}}
\pgfpathlineto{\pgfpoint{\pgf@xb + 5pt + \@tempdimb}{\pgf@yb}}
\pgfpathlineto{\pgfpoint{\pgf@xb + 5pt - \@tempdimb}{\pgf@ya}}
\pgfpathlineto{\pgfpoint{\pgf@xa - 5pt + \@tempdima}{\pgf@ya}}
\pgfpathclose
}
}}
\tikzstyle{map}=[draw,shape=NEbox,inner sep=7pt]
\tikzstyle{mapdag}=[draw,shape=SEbox,inner sep=7pt]
\tikzstyle{maptrans}=[draw,shape=SWbox,inner sep=7pt]
\tikzstyle{mapconj}=[draw,shape=NWbox,inner sep=7pt]
\tikzstyle{probs}=[shape=semicircle,fill=gray!40!white,draw=black,shape border rotate=180,minimum width=1.2cm]
\tikzstyle{arrs}=[-latex,font=\small,auto]
\tikzstyle{arrow plain}=[arrs]
\tikzstyle{arrow dashed}=[dashed,arrs]
\tikzstyle{arrow bold}=[very thick,arrs]
\tikzstyle{arrow hide}=[draw=white!0,-]
\tikzstyle{arrow reverse}=[latex-]
\tikzstyle{cdnode}=[]
\tikzstyle{gn}=[dot,fill=lime!50,minimum width=0.2cm,inner sep=0.5pt,font=\footnotesize]
\tikzstyle{rn}=[dot,fill=red!50,inner sep=0.5pt,minimum width=0.2cm,font=\footnotesize]
\tikzstyle{bn}=[dot,fill=blue,minimum width=0.3cm]
\tikzstyle{rc}=[dot,thick,fill=white,draw = red,minimum width=0.3cm,inner sep=0pt]
\tikzstyle{gc}=[dot,thick,fill=white,draw= green,inner sep=0pt,minimum width=0.3cm]
\tikzstyle{bc}=[dot,thick,fill=white,draw= blue,minimum width=0.3cm]
\tikzstyle{label}=[circle,fill=white,minimum width=0.3cm]
\tikzstyle{H box}=[rectangle,fill=yellow,draw=black,xscale=1,yscale=1,font=\small,inner sep=0.75pt]
\tikzstyle{clocklabel}=[dot,fill=yellow,draw=black,font=\tiny,inner sep=0.75pt]
\tikzstyle{rsn}=[circle split,draw,fill=red!50,font=\tiny,inner sep=0.75pt]
\tikzstyle{gsn}=[circle split,draw,fill=lime!50,font=\tiny,inner sep=0.75pt]
\tikzstyle{bsn}=[circle split,draw,fill=blue,font=\tiny,inner sep=0.75pt]
\tikzstyle{rsc}=[circle split,thick,draw= red,draw,fill=white,font=\tiny,inner sep=0.75pt]
\tikzstyle{gsc}=[circle split,thick,draw= green,draw,fill=white,font=\tiny,inner sep=0.75pt]
\tikzstyle{bsc}=[circle split,thick,draw= blue,draw,fill=white,font=\tiny,inner sep=0.75pt]
\tikzstyle{cnot}=[fill=white,shape=circle,inner sep=-1.4pt]
\tikzstyle{wire label}=[font=\tiny, auto]
\newcommand{\bra}[1]{\ensuremath{\left\langle #1 \right|}}
\newcommand{\ket}[1]{\ensuremath{\left|  #1 \right\rangle}}
\tikzstyle{cdiag}=[matrix of math nodes, row sep=3em, column sep=3em, text height=1.5ex, text depth=0.25ex,inner sep=0.5em]
\tikzstyle{arrow above}=[transform canvas={yshift=0.5ex}]
\tikzstyle{arrow below}=[transform canvas={yshift=-0.5ex}]
\newtheorem{Th}{Theorem}[section]
\newtheorem{theorem}[Th]{Theorem}
\newtheorem{lemma}[Th]{Lemma}
\newtheorem{corollary}[Th]{Corollary}
\newtheorem{definition}[Th]{Definition}
\newenvironment{proof}{\textbf{Proof:}}{\hfill$\Box$\newline}
\title{Qutrit ZX-calculus is Complete for Stabilizer Quantum Mechanics}
\author{Quanlong Wang
\institute{Department of Computer Science, University of Oxford, UK}
\email{quanlong.wang@cs.ox.ac.uk}
}
\begin{document}

\date{}\maketitle

\begin{abstract}
In this paper, we show that a qutrit version of ZX-calculus, with rules significantly different from that of the qubit version,  is complete for pure qutrit stabilizer quantum mechanics, where  state preparations and measurements are based on the three dimensional  computational basis, and unitary operations are required to be in the generalized Clifford group.  This means that any equation of diagrams that holds true under the standard interpretation in Hilbert spaces  can be derived diagrammatically. In contrast to the qubit case, the situation here is more complicated due to the richer structure of this qutrit ZX-calculus.
\end{abstract}

\section{Introduction}
The theory of quantum information and quantum computation (QIC) is traditionally based on binary logic (qubits). However,  multi-valued logic has been recently proposed for quantum computing with linear ion traps \cite{Muthukrishnan}, cold atoms \cite{Smith}, and entangled photons \cite{Malik}. In particular, metaplectic non-Abelian anyons were shown to be naturally connected to ternary (qutrit) logic in contrast to binary logic in topological quantum computing, based on which there comes the Metaplectic Topological Quantum Computer platform \cite{Shawn}. Furthermore,  qutrit-based computers are in certain sense more space-optimal in comparison with quantum computers based on other dimensions of systems \cite{amk}.

The current theoretical tools for qutrit-based QIC are dominated by quantum circuits. However, the quantum circuit notation has  a major defect: the transformation from one circuit diagram into another under the rules of a set of circuit equations. In contrast, the ZX-calculus introduced by Coecke and Duncan \cite{CoeckeDuncan}  is a more intuitive graphical language. Despite being easy to manipulate and seemingly un-mathematical, it is well formulated within the framework of compacted closed categories, which is an important branch of categorical quantum mechanics (CQM) pioneered in 2004 by Abramsky and Coecke \cite{Coeckesamson}. It is intuitive due to its simple rewriting rules which are represented by equations of diagrams for quantum computing. The ZX-calculus  has been successfully applied to QIC in the fields of   (topological) measurement-based quantum computing  \cite{Duncanpx,  Horsman}  and quantum error correction \cite{DuncanLucas,  ckzh}. In particular, the ZX-calculus is pretty useful for qubit stabilizer quantum mechanics because of its completeness for this sub-theory, i.e. any equality that can be derived using matrices can also be derived diagrammatically.

Taking into consideration the practicality of qutrits and the benefits of the ZX-calculus as mentioned above, it is natural to give a qutrit version of ZX-calculus. However, the generalisation from qubits to qutrits is not trivial, since the qutrit-based structures are usually much more complicated then the qubit-based structures. For instance,  the local Clifford group for qubits has only 24 elements, while the local Clifford group for qutrits has 216 elements. Thus it is no surprise that, as presented in   \cite{GongWang},  the rules of qutrit ZX-calculus are significantly different from that of the qubit case: each phase gate has a pair of phase angles, the operator commutation rule is more complicated, the Hadamard gate is not self-adjoint, the colour-change rule is doubled, and the dualiser has a much richer structure than being just an identity.  
Despite being already well established in \cite{BianWang1, BianWang2} and independently introduced as a typical special case of qudit ZX-calculus in \cite{Ranchin}, to the best of our knowledge, there are no completeness results available for qutrit ZX-calculus. Without this kind of results, how can we even know that the rules of  a so-called  ZX-calculus are useful enough for quantum computing?

In this paper, based on the rules and results in \cite{GongWang}, we show that  the qutrit ZX-calculus is complete for pure qutrit stabilizer quantum mechanics (QM). The strategy we used here mirrors from that of the qubit case in \cite{Miriam1}, although it is technically more complicated, especially for the  completeness of the single qutrit Clifford group $\mathcal{C}_1$. Firstly, we show that  any stabilizer state diagram is equal to some GS-LC diagram within the ZX-calculus, 
where a GS-LC diagram consists of a graph state diagram with arbitrary single-qutrit Clifford operators applied to each output. We then show that any stabilizer state diagram can be further brought into a reduced form of the GS-LC diagram. Finally, for any two  stabilizer state diagrams on the same number of qutrits, we make them into a simplified pair of reduced GS-LC diagram such that they are equal under the standard interpretation in Hilbert spaces  if and only if they are identical in the sense that they are constructed by the same element constituents in the same way. By the map-state duality, the case for operators represented by diagrams are also covered, thus we have shown the completeness of the ZX-calculus for  all pure qutrit stabilizer QM.

A natural question will arise at the end of this paper: is there a general proof of completeness of the ZX-calculus for arbitrary dimensional (qudit) stabilizer QM?  This is the problem we would like to address next, but we should also mention some challenges we may face. With exception of the increase of the order of local Clifford groups, the main difficulty comes from the fact that that it is not known whether any stabilizer state is equivalent to a graph state under local Clifford group for the dimension $d$ having multiple prime factors \cite{Hostens}. As far as we know, it is true in the case that $d$ has only single prime factors  \cite{ShiangGr}.  Furthermore, the sufficient and necessary condition for two qudit graph states to be  equivalent under local Clifford group in terms of operations on graphs is unknown in the case that $d$ is non-prime. Finally, the generalised Euler decomposition rule for the generalised Hadamard gate can not be trivially derived, and other uncommon rules might be needed for general $d$.

\section{Qutrit Stabilizer quantum mechanics}

\subsection{ The generalized Pauli group and Clifford group }
 The notions of  Pauli group and Clifford group for qubits can be generalised to 
 qutrits in an natural way:
In the 3-dimensional Hilbert space $H_3$, we define the \textit{ generalised Pauli operators}  $X$ and $Z$ as follows
\begin{equation}
X\ket{j}=\ket{j+1}, ~~Z\ket{j}=\omega^j\ket{j},
\end{equation}
where $ j \in \mathbb{Z}_3 $, $\omega=e^{i\frac{2}{3}\pi}$, and the
 addition is a modulo 3 operation. We will use the same denotation for
 tensor products of these operators as is presented in \cite{Hostens}: for
$v, w \in \mathbb{Z}_3^n$,
$a:=\left(
\begin{array}{c}
v \\
w
\end{array}
\right)\in \mathbb{Z}_3^{2n},$

\begin{equation}\label{xza}
XZ(a):= X^{v_1}Z^{w_1}\otimes \cdots \otimes X^{v_n}Z^{w_n}.
\end{equation}

We define the  \textit{generalized Pauli group} $\mathcal{P}_n$ on $n$ qutrits as
$$
\mathcal{P}_n=\{ \omega^{\delta}XZ(a)| a\in \mathbb{Z}^n_3,  \delta\in \mathbb{Z}_3 \}.
$$

The \textit{ generalized  Clifford group} $\mathcal{C}_n$ on n qutrits is defined as the normalizer of $\mathcal{P}_n$:
$$
\mathcal{C}_n=\{ Q| Q\mathcal{P}_n Q^{\dagger}= \mathcal{P}_n \}.
$$
Notably, for $n=1$, $\mathcal{C}_1$ is called the \textit{ generalized local Clifford group}.
Similar to the qubit case, it can be shown that the generalized Clifford group is generated by  the gate $\mathcal{S}=\ket{0}\bra{0}+\ket{1}\bra{1}+ \omega\ket{2}\bra{2}$,
the generalized Hadamard gate $H=\frac{1}{\sqrt{3}}\sum_{k, j=0}^{2}\omega^{kj}\ket{k}\bra{j}$, and the SUM gate $\Lambda=\sum_{i, j=0}^{2}\ket{i, i+j(mod 3)}\bra{ij}$  \cite{Hostens, Shawn}. In particular, the local Clifford group $\mathcal{C}_1$ is generated by the gate $\mathcal{S}$ and the generalized Hadamard gate $H$ \cite{Hostens}, with the group order being $3^3(3^2-1)=216$, up to global phases \cite{Mark}. 

We define the \textit{stabilizer code} as  the non-zero joint eigenspace to the eigenvalue $1$ of a subgroup of the generalized Pauli group $\mathcal{P}_n$ \cite{Mohsen}. A  \textit{stabilizer state} $\ket{\psi}$  is a stabilizer code of dimension 1, which   
 is therefore stabilized by an abelian subgroup of order $3^{n}$ of the Pauli group excluding multiples of the identity other than the identity itself \cite{Hostens}. We
call this subgroup the \textit{stabilizer}  $\mathfrak{S}$ of $\ket{\psi}$.

\subsection{ Graph states }
Graph states are special stabilizer states which are constructed based on undirected graphs without loops. However, it turns out that they are not far from stabilizer states. 

\begin{definition}\cite{Dan}
A  $\mathbb{Z}_3 $-weighted graph is a pair $G = (V,E)$ where $V$ is a set of $n$ vertices and $E$ is a  collection of weighted edges specified by the adjacency matrix $\Gamma$, which is a symmetric $n$ by $n$ matrix with zero diagonal entries, each  matrix element ${\Gamma_{lm}}\in \mathbb{Z}_3$ representing the weight of the edge connecting vertex $l$ with vertex $m$.
\end{definition}

\begin{definition}\label{Graph State}\cite{Griffiths}
Given a   $\mathbb{Z}_3 $-weighted graph $G = (V,E)$ with $n$ vertices and  adjacency matrix $\Gamma$, 
 the corresponding qutrit graph state can be defined as 
$$\ket{G}=\mathcal{U}\ket{+}^{\otimes n},$$ 
where $\ket{+}=\frac{1}{\sqrt{3}}(\ket{0}+\ket{1}+\ket{2})$,
$\mathcal{U}=\prod_{\{l,m\}\in E}(C_{lm})^{\Gamma_{lm}}$,  $C_{lm}=\Sigma_{j=0}^{2}\Sigma_{k=0}^{2}\omega^{jk}\ket{j}\bra{j}_l\otimes \ket{k}\bra{k}_m$, subscripts indicate to which qutrit the operator is applied.
\end{definition}

\begin{lemma}\cite{Dan}
The qutrit graph state $\ket{G}$ is the unique (up to a global phase) joint $+1$ eigenstate of the group generated by the operators
$$
X_v\prod_{u\in V}(Z_{u})^{\Gamma_{uv}} ~~\mbox{for all}~ v \in V.
$$
\end{lemma}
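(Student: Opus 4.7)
The plan is a standard stabilizer-code argument adapted to qutrits. Call $K_v := X_v\prod_{u\in V}(Z_u)^{\Gamma_{uv}}$ the proposed generators and write $\mathfrak{S}$ for the group they generate in $\mathcal{P}_n$.

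First I would show that each $K_v$ fixes $\ket{G}$. Since $C_{lm}$ is diagonal in the computational basis it commutes with every $Z_u$, so the only nontrivial conjugation to compute is $C_{lm}^{\,a}X_lC_{lm}^{-a}$. A direct calculation in the computational basis using $C_{lm}=\sum_{j,k}\omega^{jk}\ketbra{j}{j}_l\otimes\ketbra{k}{k}_m$ gives $C_{lm}X_lC_{lm}^{-1}=X_lZ_m$, and by the symmetry of $C_{lm}$ in its two legs, $C_{lm}X_mC_{lm}^{-1}=X_mZ_l$. Applying these with the weights $\Gamma_{lm}$ yields $\mathcal{U}X_v\mathcal{U}^{-1}=X_v\prod_{u}(Z_u)^{\Gamma_{uv}}=K_v$. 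Combining this with $X_v\ket{+}^{\otimes n}=\ket{+}^{\otimes n}$ (because $X$ cyclically permutes the summands of $\ket{+}$) gives $K_v\ket{G}=\mathcal{U}X_v\ket{+}^{\otimes n}=\ket{G}$.

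Next I would verify that $\mathfrak{S}$ is abelian of order $3^n$. Using $XZ=\omega^{-1}ZX$ in the single-qutrit algebra, commuting $X_w$ past $Z_w^{\Gamma_{wv}}$ inside $K_vK_w$ produces a phase $\omega^{-\Gamma_{wv}}$, and the reverse order $K_wK_v$ produces $\omega^{-\Gamma_{vw}}$; since $\Gamma$ is symmetric these agree, so $[K_v,K_w]=0$. For the order, any product $\prod_v K_v^{a_v}$ has $X$-part $\prod_v X_v^{a_v}$ which is trivial only when every $a_v=0$; since a Pauli operator is the identity only when its $X$- and $Z$-exponents vanish and its phase equals $1$, the generators are independent and $|\mathfrak{S}|=3^n$.

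Finally I would invoke the standard stabilizer-projector argument: the projector $P=\tfrac{1}{|\mathfrak{S}|}\sum_{S\in\mathfrak{S}}S$ onto the joint $+1$ eigenspace has trace $\tfrac{1}{3^n}\sum_{S\in\mathfrak{S}}\mathrm{tr}(S)$, and $\mathrm{tr}(XZ(a))=0$ for every nontrivial $a\in\mathbb{Z}_3^{2n}$. The independence argument above already shows that no nontrivial element of $\mathfrak{S}$ is a scalar, so every $S\neq I$ in $\mathfrak{S}$ has nontrivial $X$-part and hence vanishing trace; thus $\mathrm{tr}(P)=1$ and the eigenspace is one-dimensional. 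Since $\ket{G}$ lies in it, $\ket{G}$ is the unique joint $+1$ eigenstate up to a global phase. The only real computation is the conjugation $C_{lm}X_lC_{lm}^{-1}=X_lZ_m$; everything else is formal manipulation inside $\mathcal{P}_n$.
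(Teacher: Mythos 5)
Your proof is correct, and there is nothing in the paper to compare it against: the paper does not prove this lemma at all, but simply quotes it from the cited reference \cite{Dan}. Your argument is the standard stabilizer-formalism proof that such a reference would contain, and all three steps check out under the paper's conventions: the conjugation $C_{lm}^{a}X_lC_{lm}^{-a}=X_lZ_m^{a}$ gives $\mathcal{U}X_v\mathcal{U}^{-1}=K_v$ and hence $K_v\ket{G}=\ket{G}$; commutativity and independence of the $K_v$ follow from the symmetry and zero diagonal of $\Gamma$ (note also that each $K_v$ has order exactly $3$, which is what licenses writing every group element as $\prod_v K_v^{a_v}$ with $a_v\in\mathbb{Z}_3$ --- this holds here because every qutrit Pauli has order dividing $3$); and the group-averaging projector $P=\tfrac{1}{3^n}\sum_{S\in\mathfrak{S}}S$ has trace $1$ because every nontrivial element has a nontrivial $X$-part and hence zero trace. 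One cosmetic remark: with the paper's conventions $ZX=\omega XZ$, so the phase picked up when exchanging $X_w$ with $Z_w^{\Gamma_{wv}}$ is $\omega^{+\Gamma_{wv}}$ rather than $\omega^{-\Gamma_{wv}}$ (the sign depends on which factor you move past which); since both orderings acquire the same exponent and $\Gamma_{vw}=\Gamma_{wv}$, your conclusion $[K_v,K_w]=0$ is unaffected.
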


Therefore,  graph states must be stabilizer states.  On the contrary, stabilizer states are equivalent to graph states in the following sense.
\begin{definition}\cite{Miriam1}
 Two $n$-qutrit stabilizer states $\ket{\psi}$ and $\ket{\phi}$ are equivalent with respect to the local Clifford group if there exists $U \in \mathcal{C}_1^{\otimes n}$ such that $\ket{\psi}=U\ket{\phi}$.
 
 \end{definition}
 
\begin{lemma}\cite{Mohsen} \label{stabilizerequivgraphstate}
 Every qutrit stabilizer state is equivalent to a graph state with respect to the local Clifford group.
 
\end{lemma}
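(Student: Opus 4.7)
The plan is to work entirely in the check-matrix encoding of the stabilizer. Let $g_1,\dots,g_n$ be generators of $\mathfrak{S}$, write $g_i=\omega^{\delta_i}XZ(a_i)$ with $a_i=\binom{v_i}{w_i}\in\mathbb{Z}_3^{2n}$, and collect the $v_i,w_i$ as columns of $n\times n$ matrices $V,W$ over $\mathbb{Z}_3$, so the stabilizer is captured by the $2n\times n$ matrix $\binom{V}{W}$ of rank $n$. Commutativity of the $g_i$ is equivalent to the symplectic relation $V^{T}W=W^{T}V$. By the preceding lemma, a graph state with adjacency matrix $\Gamma$ has check matrix $\binom{I}{\Gamma}$ with $\Gamma$ symmetric and zero-diagonal. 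Thus the claim reduces to showing that, using only (i) relabellings of generators, which act as invertible column operations on $\binom{V}{W}$ and leave the stabilized state fixed, and (ii) conjugation by local Clifford gates on individual qutrits, one can bring $\binom{V}{W}$ into this canonical form.

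The local Clifford moves I use are the per-qutrit Hadamard $H_k$, which swaps row $k$ of $V$ with row $k$ of $W$ (up to signs reflecting $HXH^{\dagger}=Z$, $HZH^{\dagger}=X^{-1}$), and the per-qutrit phase gate $\mathcal{S}_k$, which adds row $k$ of $V$ into row $k$ of $W$ (reflecting $\mathcal{S}X\mathcal{S}^{\dagger}\sim XZ$). The reduction then proceeds in three stages. First, find a subset $S\subseteq\{1,\dots,n\}$ such that after applying $H_k$ for each $k\in S$ the resulting top block $V'$ is invertible over $\mathbb{Z}_3$; such $S$ exists because the $2n$ rows of $\binom{V}{W}$, paired up by qutrit index, admit a transversal of $n$ linearly independent rows with one representative from each pair, by a symplectic matroid argument using the relation $V^{T}W=W^{T}V$ and $\mathrm{rank}\binom{V}{W}=n$. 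Second, right-multiply $\binom{V'}{W'}$ by $(V')^{-1}$ via column operations on the generators to obtain $\binom{I}{\Gamma}$ with $\Gamma=W'(V')^{-1}$; the symplectic condition, now reading $\Gamma=\Gamma^{T}$, forces the required symmetry. Third, apply $\mathcal{S}_k^{c_k}$ with $c_k=-\Gamma_{kk}$ to zero the diagonal: the corresponding row operation adds the $k$-th row of $V=I$, namely $e_k^{T}$, into row $k$ of $W=\Gamma$, so it shifts only the entry $\Gamma_{kk}$ and preserves the symmetry achieved in stage two.

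The resulting stabilizer state has the same check matrix as a graph state and hence coincides with it up to a global phase, which can be absorbed into the scalar $\omega^{\delta}$ prefactor by a final diagonal Pauli adjustment. The technically delicate step, and the only place where the proof rises above routine linear algebra over $\mathbb{Z}_3$, is the first-stage matroid argument: showing that local Hadamards alone render $V$ invertible rests on the symplectic structure of $\binom{V}{W}$ rather than on mere row-rank, and is where the abelian property of the stabilizer is genuinely used. The remaining stages are then a constrained Gaussian elimination executed using only the available local Clifford moves, and map-state duality together with the preceding definitions tie the output back to the graph-state form of Definition \ref{Graph State}.
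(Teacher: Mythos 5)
You cannot be checked against the paper's own proof here, because the paper gives none: Lemma \ref{stabilizerequivgraphstate} is imported with a citation to \cite{Mohsen}, and no argument for it appears in the text. Your proposal is, in substance, a reconstruction of the proof living in that reference (and in \cite{Hostens}): check-matrix encoding, commutativity as the symplectic relation $V^{T}W=W^{T}V$, $H_k$ as a (signed) swap of the $k$-th rows of $V$ and $W$, $\mathcal{S}_k$ as adding row $k$ of $V$ into row $k$ of $W$, column operations as changes of generating set, and a final diagonal-Pauli correction. That skeleton is correct. Two wordings need repair, though neither is fatal. First, ``relabellings of generators'' only yields permutation matrices; stage two needs arbitrary invertible column operations, which are legitimate for a different reason: replacing generators by products of powers of generators leaves the group, hence the stabilized state, unchanged. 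Second, ``the same check matrix \dots hence coincides with it up to a global phase'' is false as stated: generators carrying phases $\omega^{\delta_i}\neq 1$ generate a \emph{different} stabilizer group and stabilize a genuinely different state, not the same state up to phase. The diagonal Pauli adjustment you mention is therefore not cosmetic but essential --- conjugating by $Z_k^{a_k}$ multiplies the phase of the $k$-th generator $X_k\prod_u Z_u^{\Gamma_{uk}}$ by $\omega^{a_k}$ and touches nothing else --- and only after it do the two stabilizer groups agree, whence the states agree up to global phase by uniqueness of the stabilized state.

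The genuine gap is the step you flag yourself: the existence of $S\subseteq\{1,\dots,n\}$ making the swapped top block invertible. This is the entire mathematical content of the lemma --- the other stages are bookkeeping --- and ``by a symplectic matroid argument'' is an appeal to a theory, not a proof. It is also exactly where things can silently fail, since the claim is false without isotropy: with $n=2$, take the non-commuting ``generators'' $X_1$ and $Z_1$, so $V$ has rows $(1,0),(0,0)$ and $W$ has rows $(0,1),(0,0)$; the matrix $\binom{V}{W}$ has full column rank $2$, yet every transversal of the row pairs contains the zero second row, so no choice of $S$ works. To close the gap with the tools you already set up: let $L\subseteq\mathbb{Z}_3^{2n}$ be the column space of $\binom{V}{W}$, let $\pi$ be the projection onto the top $n$ coordinates, $r=\dim\pi(L)$, and $W_0=\{w:(0,w)\in L\}$, so $\dim W_0=n-r$. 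Isotropy gives $w\cdot v'=0$ for all $(0,w)\in L$ and $(v',w')\in L$, i.e.\ $W_0\subseteq\pi(L)^{\perp}$, and equality holds by counting dimensions. Choose $S$ with $|S|=n-r$ indexing linearly independent columns of a matrix whose rows span $W_0$, so that the coordinate projection $W_0\to\mathbb{Z}_3^{S}$ is bijective. After applying $H_k$ for $k\in S$, any kernel element of the new top projection pulls back to $(v,w)\in L$ with $v$ supported on $S$ and $w|_S=0$. For each $u\in\mathbb{Z}_3^{S}$ pick $w_0\in W_0$ with $w_0|_S=u$; then $0=w_0\cdot v=u\cdot(v|_S)$, forcing $v=0$; then $w\in W_0$ with $w|_S=0$, forcing $w=0$ by injectivity. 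Hence the new projection has trivial kernel on the $n$-dimensional space $L$, i.e.\ the swapped top block is invertible. With this argument inserted, your proof is complete and is, in substance, the proof of the cited reference.
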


 Below we describe some operations on graphs corresponding to graph states. Theses operations will play a central role in the proof of the completeness of ZX-calculus for qutrit stabilizer quantum mechanics.   

\begin{definition}\cite{Mohsen}
 Let $G = (V,E)$ be a $\mathbb{Z}_3 $-weighted graph with $n$ vertices and  adjacency matrix $\Gamma$. For every vertex $v$, and $0\neq b \in \mathbb{Z}_3$, define the operator $\circ_b v$ on the graph as follows: $G\circ_b v$ is the graph on the same vertex set, with adjacency  matrix $I(v, b)\Gamma I(v, b)$, where $I(v, b) = diag(1,1,...,b,...,1)$, $b$ being on the $v$-th entry.  For every vertex $v$ and $a  \in \mathbb{Z}_3$, define the operator $\ast_a  v $ on the graph as follows: $G\ast_a  v $ is the graph on the same vertex set, with adjacency matrix $\Gamma^{'}$, where
  $\Gamma^{'}_{jk}=\Gamma_{jk}+a\Gamma_{vj}\Gamma_{vk}$ for $j\neq k$, and $\Gamma^{'}_{jj}=0$ for all $j$. The operator $\ast_a  v $ is also called the $a$-local complementation at the vertex $v$ \cite{mmp} .
 
 \end{definition}
 
 Now the equivalence of graph states can be described in terms of these operations on graphs.

 \begin{theorem}\cite{Mohsen} \label{graphstateeuivalence}
 Two graph states  $\ket{G}$  and  $\ket{H}$  with adjacency matrices $M$ and $N$ over $\mathbb{Z}_3 $, are equivalent under local Clifford group if and only if there exists a sequence of $\ast$ and $\circ$ operators acting on one of them to obtain the other.
 \end{theorem}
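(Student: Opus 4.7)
The plan is to work at the level of the stabilizer group, combined with the symplectic representation of the qutrit local Clifford group over $\mathbb{Z}_3$. Lemma 2.5 tells us that a graph state is characterised (up to global phase) by the abelian subgroup generated by the $K_v = X_v \prod_u Z_u^{\Gamma_{uv}}$, and any local Clifford sends stabilizer to stabilizer, so the question reduces to tracking how $\{K_v\}$ transforms under an element $U = U_1 \otimes \cdots \otimes U_n \in \mathcal{C}_1^{\otimes n}$ and recognising when the image is again of graph-state form.

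For sufficiency, I would exhibit explicit local Cliffords realising the two graph operations. The move $\circ_b v$ is implemented by the single-qutrit Clifford at $v$ whose symplectic action is $Z_v \mapsto Z_v^b,\ X_v \mapsto X_v^{b^{-1}}$; conjugating each $K_u$ by this operator directly yields the generators of $\ket{G\circ_b v}$ since the $v$-th row and column of $\Gamma$ pick up the factor $b$. The move $\ast_a v$ is implemented by the qutrit analogue of the Van den Nest--Dehaene--De Moor local complementation operator, namely an $\mathcal{S}$-type gate at $v$ tensored with compensating $\mathcal{S}^{\pm a}$ gates at every neighbour $u \in N(v)$; conjugation of $K_u$ for $u \in N(v)$ produces precisely the correction $\Gamma'_{uw} = \Gamma_{uw} + a\Gamma_{vu}\Gamma_{vw}$, while $K_v$ and the $K_u$ for non-neighbours $u$ are preserved.

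For necessity, suppose $\ket{H} = U\ket{G}$. I would encode each stabilizer generator as a column of length $2n$ over $\mathbb{Z}_3$ (the $X$-exponents stacked above the $Z$-exponents), so that the stabilizer of $\ket{G}$ corresponds to the $2n \times n$ matrix $\binom{I}{\Gamma}$, and conjugation by $U$ acts by left multiplication with $\mathrm{diag}(Q_1,\dots,Q_n)$ where $Q_v \in \mathrm{Sp}(2,\mathbb{Z}_3)$. Since the image must again be a graph state, there exists an invertible $C \in GL_n(\mathbb{Z}_3)$ (which preserves the stabilizer subgroup by reselecting generators) such that $\mathrm{diag}(Q_v) \binom{I}{\Gamma} C = \binom{I}{\Gamma'}$. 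The task is then to show that every such algebraic manipulation is matched by a word in $\{\circ_b, \ast_a\}$: I would argue by induction on the number of vertices at which $Q_v \neq I$, using the generating set $\{H, \mathcal{S}\}$ of $\mathcal{C}_1$ so that at each step $U$ is reduced by a single $\mathcal{S}_v^k$ or $H_v$ factor. The $\mathcal{S}_v^k$ factors contribute $\circ$ moves directly; the $H_v$ factors, which exchange $X_v$ and $Z_v$ in the symplectic block, require pivoting on the $v$-th column of $\Gamma$ and are absorbed into a short sequence of $\ast_a v$ and $\circ_b u$ moves at $v$ and its neighbours.

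The main obstacle is the induction step in the necessity direction. Unlike the qubit case, where $\mathrm{Sp}(2,\mathbb{Z}_2)$ has order six and a single local-complementation operation suffices, over $\mathbb{Z}_3$ the local symplectic group has order $24$ and a generic $Q_v$ mixes the $X$- and $Z$-exponents with nontrivial $\mathbb{Z}_3$-coefficients, so the rescaling move $\circ$ and the edge-restructuring move $\ast$ must appear together and in the right order. The technical core will be a case analysis over a fixed generating set of $\mathrm{Sp}(2,\mathbb{Z}_3)$, matching each generator's action on $\binom{I}{\Gamma}$ with an explicit word in $\{\circ_b, \ast_a\}$ and verifying that no residual phases or non-graph-state stabilizers appear that would force moves outside this set.
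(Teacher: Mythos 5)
First, note that the paper itself does not prove this theorem: it is imported verbatim from the cited reference [Mohsen] (Bahramgiri--Beigi), so your attempt has to be measured against that literature proof. Your sufficiency half is essentially that proof, and it is sound: $\circ_b v$ is realized by the multiplication gate $\ket{j}\mapsto\ket{bj}$ at $v$, whose symplectic action $\mathrm{diag}(b^{-1},b)$ scales the $v$-th row and column of $\Gamma$ exactly as required, and $\ast_a v$ is realized by an $X$-diagonal shear at $v$ together with $\mathcal{S}$-type shears at the neighbours (over $\mathbb{Z}_3$ the exponent $a\Gamma_{vu}^2$ collapses to $\pm a$, as you wrote). Indeed the paper's Theorem \ref{gandgst1}, Corollary \ref{gandgst2} and Theorem \ref{gandgst3} are the ZX-level incarnation of precisely these computations. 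Your symplectic setup for necessity --- stabilizer of $\ket{G}$ as the column span of $\binom{I}{\Gamma}$, local Cliffords acting by diagonal symplectic blocks, re-choice of generators by some $C\in GL_n(\mathbb{Z}_3)$ --- is also the right frame.

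The genuine gap is the inductive mechanism you propose for necessity. Matching ``each generator's action on $\binom{I}{\Gamma}$ with an explicit word in $\{\circ_b,\ast_a\}$'' is impossible, because a single local Clifford generator does not map graph states to graph states: already for the one-vertex graph, $H\ket{+}=\ket{0}$ is not a graph state, while every $\ast_a$ and $\circ_b$ move fixes the empty graph and hence fixes $\ket{+}$, so no word of graph moves can ``match'' $H_v$. For the same reason, peeling off a single $\mathcal{S}_v^k$ or $H_v$ factor breaks the induction: the intermediate state $U'\ket{G}$ has left the class of graph states, so the induction hypothesis (a statement about pairs of graph states) cannot be invoked. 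The proof in [Mohsen] (generalizing Van den Nest--Dehaene--De Moor for qubits) instead exploits the \emph{global} constraint: writing the local symplectic action with diagonal $n\times n$ blocks $A,B,C,D$, graph-to-graph equivalence is the single matrix equation $\Gamma'(A+B\Gamma)=C+D\Gamma$ with $A+B\Gamma$ invertible, and the induction runs on the number of vertices $v$ with $B_{vv}\neq 0$; at each step one composes $U$ with the local Clifford realizing one $\ast_a$ move at a vertex chosen by a case analysis on the neighbourhood (the vertex itself or a suitable neighbour, possibly combined with $\circ_b$ moves), which keeps both endpoints graph states and strictly decreases that count. Your write-up gestures at ``pivoting'' for the $H_v$ factors but does not supply this mechanism, and it is the actual content of the theorem. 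A secondary omission: your $GL_n(\mathbb{Z}_3)$ bookkeeping tracks only the $\mathbb{Z}_3^{2n}$ part of the stabilizer generators and silently discards the $\omega^{\delta}$ phases, which must be shown to normalize away for the image stabilizer to be that of a \emph{bona fide} graph state.
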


\section{Qutrit ZX-calculus}
\subsection{The ZX-calculus for general pure state qutrit quantum mechanics}
The ZX-calculus is founded on a symmetric monoidal category (SMC) $\mathfrak{C}$. The objects of $\mathfrak{C}$ are natural numbers: $0, 1, 2,  \cdots$; the tensor of 
objects is just addition of numbers: $m \otimes n = m+n$. The morphisms of $\mathfrak{C}$ are diagrams of the qutrit ZX-calculus. A general diagram  $D:k\to l$   with $k$ inputs and $l$ outputs is generated by:
\begin{center}
\begin{tabular}{|r@{~}r@{~}c@{~}c|r@{~}r@{~}c@{~}c|}
\hline
$R_Z^{(n,m)}(\frac{\alpha}{\beta})$&$:$&$n\to m$ & %
\beginpgfgraphicnamed{Qutrits/generator_spider}
\begin{tikzpicture}
	\begin{pgfonlayer}{nodelayer}
		\node [style=none] (0) at (-1, 1) {};
		\node [style=none] (1) at (1, 1) {};
		\node [style=none] (2) at (-0.75, 0.75) {};
		\node [style=none] (3) at (-0.25, 0.75) {};
		\node [style=none] (4) at (0.75, 0.75) {};
		\node [style=none] (5) at (0.25, 0.5) {...};
		\node [style=gsn] (6) at (0, 0) {$\alpha$\nodepart{lower}$\beta$};
		\node [style=none] (7) at (0.25, -0.5) {...};
		\node [style=none] (8) at (-0.75, -0.75) {};
		\node [style=none] (9) at (-0.25, -0.75) {};
		\node [style=none] (10) at (0.75, -0.75) {};
		\node [style=none] (11) at (-1, -1) {};
		\node [style=none] (12) at (1, -1) {};
	\end{pgfonlayer}
	\begin{pgfonlayer}{edgelayer}
		\draw [style=braceedge] (12.center) to node[wire label, inner sep=5 pt]{$m$} (11.center);
		\draw [style=none, bend left=15, looseness=1.00] (8.center) to (6);
		\draw [style=none, bend left=15, looseness=1.00] (9.center) to (6);
		\draw [style=braceedge] (0.center) to node[wire label, inner sep=5 pt]{$n$} (1.center);
		\draw [style=none, bend left=15, looseness=1.00] (6) to (2.center);
		\draw [style=none, bend right=15, looseness=1.00] (10.center) to (6);
		\draw [style=none, bend right=15, looseness=1.00] (6) to (4.center);
		\draw [style=none, bend left=15, looseness=1.00] (6) to (3.center);
	\end{pgfonlayer}
\end{tikzpicture}}
\endpgfgraphicnamed & $R_X^{(n,m)}(\frac{\alpha}{\beta})$&$:$&$ n\to m$& %
\beginpgfgraphicnamed{Qutrits/generator_spider_gray}
\begin{tikzpicture}
	\begin{pgfonlayer}{nodelayer}
		\node [style=none] (0) at (1, -1) {};
		\node [style=none] (1) at (0.25, -0.5) {...};
		\node [style=none] (2) at (1, 1) {};
		\node [style=none] (3) at (0.75, -0.75) {};
		\node [style=none] (4) at (-0.25, -0.75) {};
		\node [style=none] (5) at (-0.25, 0.75) {};
		\node [style=none] (6) at (0.25, 0.5) {...};
		\node [style=none] (7) at (-0.75, -0.75) {};
		\node [style=rsn] (8) at (0, 0) {$\alpha$\nodepart{lower}$\beta$};
		\node [style=none] (9) at (-1, 1) {};
		\node [style=none] (10) at (-0.75, 0.75) {};
		\node [style=none] (11) at (-1, -1) {};
		\node [style=none] (12) at (0.75, 0.75) {};
	\end{pgfonlayer}
	\begin{pgfonlayer}{edgelayer}
		\draw [style=braceedge] (0.center) to node[wire label, inner sep=5 pt]{$m$} (11.center);
		\draw [style=none, bend left=15, looseness=1.00] (7.center) to (8);
		\draw [style=none, bend left=15, looseness=1.00] (4.center) to (8);
		\draw [style=braceedge] (9.center) to node[wire label, inner sep=5 pt]{$n$} (2.center);
		\draw [style=none, bend left=15, looseness=1.00] (8) to (10.center);
		\draw [style=none, bend right=15, looseness=1.00] (3.center) to (8);
		\draw [style=none, bend right=15, looseness=1.00] (8) to (12.center);
		\draw [style=none, bend left=15, looseness=1.00] (8) to (5.center);
	\end{pgfonlayer}
\end{tikzpicture}}
\endpgfgraphicnamed\\
\hline
$H$&$:$&$1\to 1$ &%
\beginpgfgraphicnamed{RGgenerator/RGg_Hada}
\begin{tikzpicture}
	\begin{pgfonlayer}{nodelayer}
		\node [style={H box}] (0) at (0, -0) {$H$};
		\node [style=none] (1) at (0, 0.5) {};
		\node [style=none] (2) at (0, -0.5) {};
	\end{pgfonlayer}
	\begin{pgfonlayer}{edgelayer}
		\draw (1.center) to (0);
		\draw (2.center) to (0);
	\end{pgfonlayer}
\end{tikzpicture}}
\endpgfgraphicnamed
 & $H^{\dagger}$&$:$&$1\to 1$  &%
\beginpgfgraphicnamed{RGgenerator/RGg_Hadad}
\begin{tikzpicture}
	\begin{pgfonlayer}{nodelayer}
		\node [style={H box}] (0) at (0, -0) {$H^\dagger$};
		\node [style=none] (1) at (0, -0.5) {};
		\node [style=none] (2) at (0, 0.5) {};
	\end{pgfonlayer}
	\begin{pgfonlayer}{edgelayer}
		\draw (2.center) to (0);
		\draw (1.center) to (0);
	\end{pgfonlayer}
\end{tikzpicture}}
\endpgfgraphicnamed \\\hline
  $\sigma$&$:$&$ 2\to 2$& %
\beginpgfgraphicnamed{scalars-s/swap}
\begin{tikzpicture}
	\begin{pgfonlayer}{nodelayer}
		\node [style=none] (0) at (0, 0) {};
		\node [style=none] (1) at (0.3, 0.3) {};
		\node [style=none] (2) at (0.3, -0.3) {};
		\node [style=none] (a) at (-0.3, 0.3) {};
		\node [style=none] (b) at (-0.3, -0.3) {};
		\node [style=none] (3) at (0.3, -0.5) {};
		\node [style=none] (4) at (0.3, 0.5) {};
	\end{pgfonlayer}
	\begin{pgfonlayer}{edgelayer}
		\draw[bend left=35] (1.center) to (0.center);
		\draw[bend right=35] (2.center) to (0.center);
		\draw[bend right=35] (a.center) to (0.center);
		\draw[bend left=35] (b.center) to (0.center);
	\end{pgfonlayer}
\end{tikzpicture}}
\endpgfgraphicnamed &$\mathbb I$&$:$&$1\to 1$&%
\beginpgfgraphicnamed{scalars-s/Id}
\begin{tikzpicture}
	\begin{pgfonlayer}{nodelayer}
		\node [style=none] (1) at (0.5, 0.3) {};
		\node [style=none] (2) at (0.5, -0.3) {};
		\node [style=none] (3) at (0.5, -0.5) {};
		\node [style=none] (4) at (0.5, 0.5) {};
	\end{pgfonlayer}
	\begin{pgfonlayer}{edgelayer}
		\draw (1.center) to (2.center);
	\end{pgfonlayer}
\end{tikzpicture}}
\endpgfgraphicnamed \\\hline
   $e $&$:$&$0 \to 0$& %
\beginpgfgraphicnamed{scalars-s/emptysquare-small}
\begin{tikzpicture}
	\begin{pgfonlayer}{nodelayer}
		\node [style=none] (4) at (0.25, 0.25) {};
		\node [style=none] (5) at (-0.25, 0.25) {};
		\node [style=none] (6) at (0.25, -0.25) {};
		\node [style=none] (7) at (-0.25, -0.25) {};
	\end{pgfonlayer}
	\begin{pgfonlayer}{edgelayer}
		\draw [dashed, color=gray] (5.center) to (7.center);
		\draw [dashed, color=gray] (7.center) to (6.center);
		\draw [dashed, color=gray] (6.center) to (4.center);
		\draw [dashed, color=gray] (4.center) to (5.center);
	\end{pgfonlayer}
\end{tikzpicture}}
\endpgfgraphicnamed &&& &  
  \\\hline
\end{tabular}
\end{center}
where $m,n\in \mathbb N$, $\alpha, \beta \in [0,  2\pi)$, and $e$ represents an empty diagram.

The composition of morphisms is  to combine these components in the following two ways: for any two morphisms $D_1:a\to b$ and $D_2: c\to d$, a \textit{ paralell composition} $D_1\otimes D_2 : a+c\to b+d$ is obtained by placing $D_1$ and $D_2$ side-by-side with $D_1$ on the left of $D_2$;
 for any two morphisms $D_1:a\to b$ and $D_2: b\to c$,  a  \textit{ sequential  composition} $D_2\circ D_1 : a\to c$ is obtained by placing $D_1$ above $D_2$, connecting the outputs of $D_1$ to the inputs of $D_2$.

Spiders with all zero phase angles are simply denoted as follows:
$$%
\beginpgfgraphicnamed{scalars-s/spiderg}
\begin{tikzpicture}
	\begin{pgfonlayer}{nodelayer}
		\node [style=none] (0) at (-0.5, 0.5) {};
		\node [style=none] (1) at (-1, 0.5) {};
		\node [style=none] (2) at (0, 0.45) {...};
		\node [style=none] (3) at (0, -0.45) {...};
		\node [style=none] (4) at (0.5, -0.5) {};
		\node [style=none] (5) at (0.75, 0.75) {};
		\node [style=none] (6) at (0.75, -0.75) {};
		\node [style=gn] (7) at (-0.25, -0) {};
		\node [style=none] (8) at (-1.25, 0.75) {};
		\node [style=none] (9) at (-1.25, -0.75) {};
		\node [style=none] (10) at (-1, -0.5) {};
		\node [style=none] (11) at (0.5, 0.5) {};
		\node [style=none] (12) at (-0.5, -0.5) {};
	\end{pgfonlayer}
	\begin{pgfonlayer}{edgelayer}
		\draw [style=none, bend left=15, looseness=1.00] (10.center) to (7);
		\draw [style=none, bend left=15, looseness=1.00] (12.center) to (7);
		\draw [style=none, bend left=15, looseness=1.00] (7) to (1.center);
		\draw [style=none, bend right=15, looseness=1.00] (4.center) to (7);
		\draw [style=none, bend right=15, looseness=1.00] (7) to (11.center);
		\draw [style=none, bend left=15, looseness=1.00] (7) to (0.center);
	\end{pgfonlayer}
\end{tikzpicture}}
\endpgfgraphicnamed := %
\beginpgfgraphicnamed{scalars-s/spidergz}
\begin{tikzpicture}
	\begin{pgfonlayer}{nodelayer}
		\node [style=none] (0) at (-0.5, 0.5) {};
		\node [style=none] (1) at (-1, 0.5) {};
		\node [style=none] (2) at (0, 0.45) {...};
		\node [style=none] (3) at (0, -0.45) {...};
		\node [style=none] (4) at (0.5, -0.5) {};
		\node [style=none] (5) at (0.75, 0.75) {};
		\node [style=none] (6) at (0.75, -0.75) {};
		\node [style=gsn] (7) at (-0.25, -0) {$0$\nodepart{lower}$0$};
		\node [style=none] (8) at (-1.25, 0.75) {};
		\node [style=none] (9) at (-1.25, -0.75) {};
		\node [style=none] (10) at (-1, -0.5) {};
		\node [style=none] (11) at (0.5, 0.5) {};
		\node [style=none] (12) at (-0.5, -0.5) {};
	\end{pgfonlayer}
	\begin{pgfonlayer}{edgelayer}
		\draw [style=none, bend left=15, looseness=1.00] (10.center) to (7);
		\draw [style=none, bend left=15, looseness=1.00] (12.center) to (7);
		\draw [style=none, bend left=15, looseness=1.00] (7) to (1.center);
		\draw [style=none, bend right=15, looseness=1.00] (4.center) to (7);
		\draw [style=none, bend right=15, looseness=1.00] (7) to (11.center);
		\draw [style=none, bend left=15, looseness=1.00] (7) to (0.center);
	\end{pgfonlayer}
\end{tikzpicture}}
\endpgfgraphicnamed\qquad\qquad%
\beginpgfgraphicnamed{scalars-s/spiderr}
\begin{tikzpicture}
	\begin{pgfonlayer}{nodelayer}
		\node [style=none] (0) at (-0.5, 0.5) {};
		\node [style=none] (1) at (-1, 0.5) {};
		\node [style=none] (2) at (0, 0.45) {...};
		\node [style=none] (3) at (0, -0.45) {...};
		\node [style=none] (4) at (0.5, -0.5) {};
		\node [style=none] (5) at (0.75, 0.75) {};
		\node [style=none] (6) at (0.75, -0.75) {};
		\node [style=rn] (7) at (-0.25, -0) {};
		\node [style=none] (8) at (-1.25, 0.75) {};
		\node [style=none] (9) at (-1.25, -0.75) {};
		\node [style=none] (10) at (-1, -0.5) {};
		\node [style=none] (11) at (0.5, 0.5) {};
		\node [style=none] (12) at (-0.5, -0.5) {};
	\end{pgfonlayer}
	\begin{pgfonlayer}{edgelayer}
		\draw [style=none, bend left=15, looseness=1.00] (10.center) to (7);
		\draw [style=none, bend left=15, looseness=1.00] (12.center) to (7);
		\draw [style=none, bend left=15, looseness=1.00] (7) to (1.center);
		\draw [style=none, bend right=15, looseness=1.00] (4.center) to (7);
		\draw [style=none, bend right=15, looseness=1.00] (7) to (11.center);
		\draw [style=none, bend left=15, looseness=1.00] (7) to (0.center);
	\end{pgfonlayer}
\end{tikzpicture}}
\endpgfgraphicnamed := %
\beginpgfgraphicnamed{scalars-s/spiderrz}
\begin{tikzpicture}
	\begin{pgfonlayer}{nodelayer}
		\node [style=none] (0) at (-0.5, 0.5) {};
		\node [style=none] (1) at (-1, 0.5) {};
		\node [style=none] (2) at (0, 0.45) {...};
		\node [style=none] (3) at (0, -0.45) {...};
		\node [style=none] (4) at (0.5, -0.5) {};
		\node [style=none] (5) at (0.75, 0.75) {};
		\node [style=none] (6) at (0.75, -0.75) {};
		\node [style=rsn] (7) at (-0.25, -0) {$0$\nodepart{lower}$0$};
		\node [style=none] (8) at (-1.25, 0.75) {};
		\node [style=none] (9) at (-1.25, -0.75) {};
		\node [style=none] (10) at (-1, -0.5) {};
		\node [style=none] (11) at (0.5, 0.5) {};
		\node [style=none] (12) at (-0.5, -0.5) {};
	\end{pgfonlayer}
	\begin{pgfonlayer}{edgelayer}
		\draw [style=none, bend left=15, looseness=1.00] (10.center) to (7);
		\draw [style=none, bend left=15, looseness=1.00] (12.center) to (7);
		\draw [style=none, bend left=15, looseness=1.00] (7) to (1.center);
		\draw [style=none, bend right=15, looseness=1.00] (4.center) to (7);
		\draw [style=none, bend right=15, looseness=1.00] (7) to (11.center);
		\draw [style=none, bend left=15, looseness=1.00] (7) to (0.center);
	\end{pgfonlayer}
\end{tikzpicture}}
\endpgfgraphicnamed\qquad$$

There are two kinds of rules for the morphisms of $\mathfrak{C}$:  the structure rules for $\mathfrak{C}$ as an SMC, as well as the rewriting rules listed in Figure \ref{figure1}.

Note that all the diagrams should be read from top to bottom.
\begin{figure}[!h]
\begin{center}
\[
\quad \qquad\begin{array}{|cccc|}
\hline
\multicolumn{3}{|c}{%
\beginpgfgraphicnamed{Qutrits/spidernew}
\begin{tikzpicture}[font={\footnotesize}]
	\begin{pgfonlayer}{nodelayer}
		\node [style=none] (0) at (-1.25, -0) {\rotatebox[origin=c]{45}{...}};
		\node [style=none] (1) at (0.25, -0) {$=$};
		\node [style=gsn] (2) at (1.25, -0) {\tiny $\alpha+\eta$\nodepart{lower}\tiny $\beta+\theta$};
		\node [style=gsn] (3) at (-0.75, -0.25) {$\eta$\nodepart{lower}$\theta$};
		\node [style=none] (4) at (-1.75, -0.5) {\raisebox{2mm}{...}};
		\node [style=none] (5) at (1.75, -0.75) {};
		\node [style=none] (6) at (-1, -0.75) {};
		\node [style=none] (7) at (1.25, -0.75) {\raisebox{2mm}{...}};
		\node [style=none] (8) at (-0.5, -0.75) {};
		\node [style=none] (9) at (0.75, -0.75) {};
		\node [style=none] (10) at (-2, -0.5) {};
		\node [style=none] (11) at (-1.5, -0.5) {};
		\node [style=none] (12) at (-0.75, -0.75) {\raisebox{2mm}{...}};
		\node [style=none] (13) at (1.25, 0.75) {\raisebox{-2mm}{...}};
		\node [style=none] (14) at (0.75, 0.75) {};
		\node [style=none] (15) at (-2, 0.75) {};
		\node [style=none] (16) at (-0.5, 0.5) {};
		\node [style=none] (17) at (-1.5, 0.75) {};
		\node [style=none] (18) at (1.75, 0.75) {};
		\node [style=gsn] (19) at (-1.75, 0.25) {$\alpha$\nodepart{lower}$\beta$};
		\node [style=none] (20) at (-0.75, 0.5) {\raisebox{-2mm}{...}};
		\node [style=none] (21) at (-1, 0.5) {};
		\node [style=none] (22) at (-1.75, 0.75) {\raisebox{-2mm}{...}};
		\node [style=none] (23) at (2.75, 0.5) {};
		\node [style=none] (24) at (3.25, 0.5) {};
		\node [style=none] (25) at (4, 0.75) {\raisebox{2mm}{...}};
		\node [style=none] (26) at (2.25, -0) {$=$};
		\node [style=none] (27) at (4.25, -0.5) {};
		\node [style=none] (28) at (2.75, -0.75) {};
		\node [style=none] (29) at (4.25, 0.75) {};
		\node [style=gsn] (30) at (4, 0.25) {$\eta$\nodepart{lower}$\theta$};
		\node [style=none] (31) at (3.75, -0.5) {};
		\node [style=none] (32) at (3.25, -0.75) {};
		\node [style=none] (33) at (4, -0.5) {\raisebox{-2mm}{...}};
		\node [style=none] (34) at (3.75, 0.75) {};
		\node [style=none] (35) at (3, 0.5) {\raisebox{2mm}{...}};
		\node [style=gsn] (36) at (3, -0.25) {$\alpha$\nodepart{lower}$\beta$};
		\node [style=none] (37) at (3.5, -0) {\rotatebox[origin=c]{45}{...}};
		\node [style=none] (38) at (3, -0.75) {\raisebox{-2mm}{...}};
	\end{pgfonlayer}
	\begin{pgfonlayer}{edgelayer}
		\draw (3) to (16.center);
		\draw (3) to (6.center);
		\draw (3) to (8.center);
		\draw (19) to (10.center);
		\draw (19) to (11.center);
		\draw [bend right, looseness=1.00] (19) to (3);
		\draw [bend left, looseness=1.00] (19) to (3);
		\draw (14.center) to (2);
		\draw (2) to (9.center);
		\draw (5.center) to (2);
		\draw (2) to (18.center);
		\draw (19) to (15.center);
		\draw (19) to (17.center);
		\draw (3) to (21.center);
		\draw (30) to (27.center);
		\draw (30) to (34.center);
		\draw (30) to (29.center);
		\draw (36) to (23.center);
		\draw (36) to (24.center);
		\draw [bend left, looseness=1.00] (36) to (30);
		\draw [bend right, looseness=1.00] (36) to (30);
		\draw (36) to (28.center);
		\draw (36) to (32.center);
		\draw (30) to (31.center);
	\end{pgfonlayer}
\end{tikzpicture}}
\endpgfgraphicnamed}&(S1)\\
\beginpgfgraphicnamed{RGrelations/s2}
\begin{tikzpicture}
	\begin{pgfonlayer}{nodelayer}
		\node [style=gn] (0) at (-0.75, -0) {};
		\node [style=none] (1) at (0, -0) {$:=$};
		\node [style=none] (2) at (1, -0.5) {};
		\node [style=none] (3) at (1, 0.5) {};
		\node [style=none] (4) at (-0.75, -0.5) {};
		\node [style=none] (5) at (-0.75, 0.5) {};
		\node [style=gsn] (6) at (1, -0) {\tiny $0$\nodepart{lower}\tiny $0$};
		\node [style=none] (7) at (2.75, 0.5) {};
		\node [style=none] (8) at (2.75, -0.5) {};
		\node [style=none] (9) at (2, -0) {$=$};
	\end{pgfonlayer}
	\begin{pgfonlayer}{edgelayer}
		\draw (6) to (2.center);
		\draw (6) to (3.center);
		\draw (0) to (4.center);
		\draw (0) to (5.center);
		\draw (7.center) to (8.center);
	\end{pgfonlayer}
\end{tikzpicture}}
\endpgfgraphicnamed&(S2)&%
\beginpgfgraphicnamed{Qutrits/cupswap}
\begin{tikzpicture}
	\begin{pgfonlayer}{nodelayer}
		\node [style=none] (0) at (0, -0) {$=$};
		\node [style=none] (1) at (0.5, 0.75) {};
		\node [style=none] (2) at (1.5, -0) {};
		\node [style=none] (3) at (0.5, -0) {};
		\node [style=none] (4) at (1.5, 0.75) {};
		\node [style=none] (5) at (-1.75, 0.5) {};
		\node [style=none] (6) at (-0.25, 0.5) {};
		\node [style=gn] (7) at (-1, -0.25) {};
		\node [style=gn] (8) at (1, -0.5) {};
	\end{pgfonlayer}
	\begin{pgfonlayer}{edgelayer}
		\draw [in=-90, out=-90, looseness=1.50] (3.center) to (2.center);
		\draw [in=90, out=-90, looseness=1.00] (4.center) to (3.center);
		\draw [in=90, out=-90, looseness=1.00] (1.center) to (2.center);
		\draw [bend right=60, looseness=2.00] (5.center) to (6.center);
	\end{pgfonlayer}
\end{tikzpicture}}
\endpgfgraphicnamed&(S3)\\
\beginpgfgraphicnamed{RGrelations/b1}
\begin{tikzpicture}
	\begin{pgfonlayer}{nodelayer}
		\node [style=none] (0) at (0.75, -0.25) {};
		\node [style=none] (1) at (-1, -0.5) {};
		\node [style=none] (2) at (-0.5, -0.5) {};
		\node [style=none] (3) at (0, -0) {$=$};
		\node [style=rn] (4) at (1.25, 0.25) {};
		\node [style=rn] (5) at (-0.75, 0.5) {};
		\node [style=gn] (6) at (-0.75, -0) {};
		\node [style=none] (7) at (1.25, -0.25) {};
		\node [style=rn] (8) at (0.75, 0.25) {};
	\end{pgfonlayer}
	\begin{pgfonlayer}{edgelayer}
		\draw [style=none] (5) to (6);
		\draw [style=none] (6) to (1.center);
		\draw [style=none] (6) to (2.center);
		\draw [style=none] (8) to (0.center);
		\draw [style=none] (4) to (7.center);
	\end{pgfonlayer}
\end{tikzpicture}}
\endpgfgraphicnamed&(B1)&%
\beginpgfgraphicnamed{RGrelations/b2}
\begin{tikzpicture}
	\begin{pgfonlayer}{nodelayer}
		\node [style=none] (0) at (-1.75, 1) {};
		\node [style=rn] (1) at (1.25, 0.25) {};
		\node [style=gn] (2) at (-1, 0.5) {};
		\node [style=none] (3) at (-1.75, -0.75) {};
		\node [style=none] (4) at (1.5, -0.75) {};
		\node [style=none] (5) at (1, 0.75) {};
		\node [style=none] (6) at (0, -0) {$=$};
		\node [style=none] (7) at (1, -0.75) {};
		\node [style=gn] (8) at (-1.75, 0.5) {};
		\node [style=rn] (9) at (-1.75, -0.25) {};
		\node [style=none] (10) at (-1, 1) {};
		\node [style=none] (11) at (-1, -0.75) {};
		\node [style=none] (12) at (1.5, 0.75) {};
		\node [style=gn] (13) at (1.25, -0.25) {};
		\node [style=rn] (14) at (-1, -0.25) {};
	\end{pgfonlayer}
	\begin{pgfonlayer}{edgelayer}
		\draw [style=none] (11.center) to (14);
		\draw [style=none] (3.center) to (9);
		\draw [style=none] (2) to (10.center);
		\draw [style=none, bend right, looseness=1.00] (14) to (2);
		\draw [style=none] (8) to (0.center);
		\draw [style=none, bend left, looseness=1.00] (9) to (8);
		\draw [style=none] (4.center) to (13);
		\draw [style=none] (13) to (1);
		\draw [style=none] (1) to (5.center);
		\draw [style=none] (1) to (12.center);
		\draw (13) to (7.center);
		\draw (8) to (14);
		\draw (2) to (9);
	\end{pgfonlayer}
\end{tikzpicture}}
\endpgfgraphicnamed&(B2)\\
\beginpgfgraphicnamed{Qutrits/k1n}
\begin{tikzpicture}
	\begin{pgfonlayer}{nodelayer}
		\node [style=none] (0) at (-0.5, 1) {};
		\node [style=rn] (1) at (-3.25, -0) {};
		\node [style=gsn] (2) at (0.25, -0) {\tiny $2$\nodepart{lower}\tiny $1$};
		\node [style=rn] (3) at (0.5, 0.5) {};
		\node [style=none] (4) at (-0.75, -0.5) {};
		\node [style=none] (5) at (-2.5, -0.5) {};
		\node [style=none] (6) at (-2.75, 0.25) {$=$};
		\node [style=none] (7) at (-0.25, -0.5) {};
		\node [style=gsn] (8) at (-2.5, -0) {\tiny $1$\nodepart{lower}\tiny $2$};
		\node [style=gsn] (9) at (-0.5, 0.5) {\tiny $2$\nodepart{lower}\tiny $1$};
		\node [style=none] (10) at (0.75, -0.5) {};
		\node [style=none] (11) at (0, 0.25) {$=$};
		\node [style=gsn] (12) at (-2, -0) {\tiny $1$\nodepart{lower}\tiny $2$};
		\node [style=none] (13) at (-2, -0.5) {};
		\node [style=gsn] (14) at (-3.25, 0.5) {\tiny $1$\nodepart{lower}\tiny $2$};
		\node [style=gsn] (15) at (0.75, -0) {\tiny $2$\nodepart{lower}\tiny $1$};
		\node [style=none] (16) at (-3.25, 1) {};
		\node [style=rn] (17) at (-0.5, -0) {};
		\node [style=none] (18) at (-2.25, 1) {};
		\node [style=none] (19) at (-3, -0.5) {};
		\node [style=none] (20) at (0.25, -0.5) {};
		\node [style=rn] (21) at (-2.25, 0.5) {};
		\node [style=none] (22) at (-3.5, -0.5) {};
		\node [style=none] (23) at (0.5, 1) {};
	\end{pgfonlayer}
	\begin{pgfonlayer}{edgelayer}
		\draw [style=none] (1) to (22.center);
		\draw [style=none] (1) to (19.center);
		\draw [style=none] (18.center) to (21);
		\draw [style=none] (17) to (4.center);
		\draw [style=none] (17) to (7.center);
		\draw [style=none] (23.center) to (3);
		\draw [style=none] (16.center) to (14);
		\draw [style=none] (14) to (1);
		\draw [style=none] (21) to (8);
		\draw [style=none] (8) to (5.center);
		\draw [style=none] (21) to (12);
		\draw [style=none] (12) to (13.center);
		\draw [style=none] (0.center) to (9);
		\draw [style=none] (9) to (17);
		\draw [style=none] (3) to (2);
		\draw [style=none] (2) to (20.center);
		\draw [style=none] (3) to (15);
		\draw [style=none] (15) to (10.center);
	\end{pgfonlayer}
\end{tikzpicture}}
\endpgfgraphicnamed&(K1)&%
\beginpgfgraphicnamed{Qutrits/k2n}
\begin{tikzpicture}
	\begin{pgfonlayer}{nodelayer}
		\node [style=gsn] (0) at (-1.5, 0.5) {\tiny $1$\nodepart{lower}\tiny $2$};
		\node [style=rsn] (1) at (-1.5, -0.25) { $\alpha$\nodepart{lower} $\beta$};
		\node [style=none] (2) at (-1, -0) {$=$};
		\node [style=none] (3) at (-1.5, 1) {};
		\node [style=none] (4) at (-1.5, -0.75) {};
		\node [style=none] (5) at (-0.5, -0.75) {};
		\node [style=none] (6) at (-0.5, 1) {};
		\node [style=gsn] (7) at (-0.5, -0.25) {\tiny $1$\nodepart{lower}\tiny $2$};
		\node [style=none] (8) at (0.75, 1) {};
		\node [style=none] (9) at (0.75, -0.75) {};
		\node [style=none] (10) at (1.25, -0) {$=$};
		\node [style=gsn] (11) at (1.75, -0.25) {\tiny $2$\nodepart{lower}\tiny $1$};
		\node [style=none] (12) at (1.75, -0.75) {};
		\node [style=none] (13) at (1.75, 1) {};
		\node [style=rsn] (14) at (0.75, -0.25) { $\alpha$\nodepart{lower} $\beta$};
		\node [style=gsn] (15) at (0.75, 0.5) {\tiny $2$\nodepart{lower}\tiny $1$};
		\node [style=rsn] (16) at (-0.5, 0.5) {\tiny $\beta\textnormal{-}\alpha$\nodepart{lower}\tiny $\textnormal{-}\alpha$};
		\node [style=rsn] (17) at (1.75, 0.5) {\tiny $\textnormal{-}\beta$\nodepart{lower}\tiny $\alpha\textnormal{-}\beta$};
	\end{pgfonlayer}
	\begin{pgfonlayer}{edgelayer}
		\draw (3.center) to (0);
		\draw (0) to (1);
		\draw (1) to (4.center);
		\draw (7) to (5.center);
		\draw (8.center) to (15);
		\draw (15) to (14);
		\draw (14) to (9.center);
		\draw (11) to (12.center);
		\draw (6.center) to (16);
		\draw (16) to (7);
		\draw (13.center) to (17);
		\draw (17) to (11);
	\end{pgfonlayer}
\end{tikzpicture}}
\endpgfgraphicnamed&(K2)\\
%
\beginpgfgraphicnamed{RGrelations/h1}
\begin{tikzpicture}
	\begin{pgfonlayer}{nodelayer}
		\node [style={H box}] (0) at (0, 0.25) {$H^\dagger$};
		\node [style=none] (1) at (-1, 0.75) {};
		\node [style=none] (2) at (1, 0.5) {};
		\node [style=none] (3) at (0, 0.75) {};
		\node [style={H box}] (4) at (-1, -0.25) {$H^\dagger$};
		\node [style=none] (5) at (0.5, -0) {$=$};
		\node [style={H box}] (6) at (-1, 0.25) {$H$};
		\node [style=none] (7) at (0, -0.75) {};
		\node [style=none] (8) at (1, -0.5) {};
		\node [style=none] (9) at (-1, -0.75) {};
		\node [style=none] (10) at (-0.5, -0) {$=$};
		\node [style={H box}] (11) at (0, -0.25) {$H$};
	\end{pgfonlayer}
	\begin{pgfonlayer}{edgelayer}
		\draw (3.center) to (0);
		\draw (7.center) to (11);
		\draw (11) to (0);
		\draw (8.center) to (2.center);
		\draw (1.center) to (6);
		\draw (9.center) to (4);
		\draw (4) to (6);
	\end{pgfonlayer}
\end{tikzpicture}}
\endpgfgraphicnamed&(H1)&%
\beginpgfgraphicnamed{Qutrits/HadaDecom}
\begin{tikzpicture}
	\begin{pgfonlayer}{nodelayer}
		\node [style=none] (0) at (-0.5, 0.5) {};
		\node [style=rsn] (1) at (0.5, -0) {$2$\nodepart{lower}$2$};
		\node [style=gsn] (2) at (0.5, 0.5) {$2$\nodepart{lower}$2$};
		\node [style=none] (3) at (0, -0) {$=$};
		\node [style=none] (4) at (-0.5, -0.5) {};
		\node [style=gsn] (5) at (0.5, -0.5) {$2$\nodepart{lower}$2$};
		\node [style={H box}] (6) at (-0.5, -0) {$H$};
		\node [style=none] (7) at (0.5, -1) {};
		\node [style=none] (8) at (0.5, 1) {};
	\end{pgfonlayer}
	\begin{pgfonlayer}{edgelayer}
		\draw (0.center) to (6);
		\draw (4.center) to (6);
		\draw (8.center) to (2);
		\draw (2) to (1);
		\draw (1) to (5);
		\draw (5) to (7.center);
	\end{pgfonlayer}
\end{tikzpicture}}
\endpgfgraphicnamed&(EU)\\
%
\beginpgfgraphicnamed{RGrelations/h2}
\begin{tikzpicture}
	\begin{pgfonlayer}{nodelayer}
		\node [style=none] (0) at (0.5, 0.75) {};
		\node [style=none] (1) at (-0.5, -1) {};
		\node [style={H box}] (2) at (-0.5, 0.5) {$H$};
		\node [style=none] (3) at (-1, -0.75) {\raisebox{2mm}{...}};
		\node [style=none] (4) at (1, -0.75) {\raisebox{2mm}{...}};
		\node [style=none] (5) at (1, 0.75) {\raisebox{-2mm}{...}};
		\node [style=rsn] (6) at (-1, -0) {$\alpha$\nodepart{lower}$\beta$};
		\node [style={H box}] (7) at (-1.5, -0.5) {$H^\dagger$};
		\node [style=none] (8) at (-0.5, 1) {};
		\node [style={H box}] (9) at (-0.5, -0.5) {$H^\dagger$};
		\node [style=none] (10) at (-1, 0.75) {\raisebox{-2mm}{...}};
		\node [style=none] (11) at (0, -0) {$=$};
		\node [style=gsn] (12) at (1, -0) {$\alpha$\nodepart{lower}$\beta$};
		\node [style=none] (13) at (1.5, -0.75) {};
		\node [style=none] (14) at (1.5, 0.75) {};
		\node [style=none] (15) at (-1.5, 1) {};
		\node [style={H box}] (16) at (-1.5, 0.5) {$H$};
		\node [style=none] (17) at (-1.5, -1) {};
		\node [style=none] (18) at (0.5, -0.75) {};
	\end{pgfonlayer}
	\begin{pgfonlayer}{edgelayer}
		\draw (6) to (7);
		\draw (6) to (9);
		\draw (9) to (1.center);
		\draw (7) to (17.center);
		\draw (6) to (16);
		\draw (6) to (2);
		\draw (2) to (8.center);
		\draw (16) to (15.center);
		\draw (12) to (0.center);
		\draw (12) to (14.center);
		\draw (12) to (18.center);
		\draw (12) to (13.center);
	\end{pgfonlayer}
\end{tikzpicture}}
\endpgfgraphicnamed&(H2)&%
\beginpgfgraphicnamed{Qutrits/h2prime}
\begin{tikzpicture}
	\begin{pgfonlayer}{nodelayer}
		\node [style=none] (0) at (0, -0) {$=$};
		\node [style={H box}] (1) at (-1.5, -0.5) {$H$};
		\node [style=none] (2) at (0.25, 0.75) {};
		\node [style=none] (3) at (0.75, -0.75) {\raisebox{2mm}{...}};
		\node [style={H box}] (4) at (-0.5, -0.5) {$H$};
		\node [style=none] (5) at (0.75, 0.75) {\raisebox{-2mm}{...}};
		\node [style=gsn] (6) at (0.75, -0) {$\beta$\nodepart{lower}$\alpha$};
		\node [style=none] (7) at (1.25, -0.75) {};
		\node [style=none] (8) at (0.25, -0.75) {};
		\node [style=none] (9) at (-1, 0.75) {\raisebox{-2mm}{...}};
		\node [style=none] (10) at (-1.5, 1) {};
		\node [style={H box}] (11) at (-0.5, 0.5) {$H^\dagger$};
		\node [style=none] (12) at (-0.5, -1) {};
		\node [style=none] (13) at (-1, -0.75) {\raisebox{2mm}{...}};
		\node [style=none] (14) at (1.25, 0.75) {};
		\node [style=rsn] (15) at (-1, -0) {$\alpha$\nodepart{lower}$\beta$};
		\node [style=none] (16) at (-1.5, -1) {};
		\node [style=none] (17) at (-0.5, 1) {};
		\node [style={H box}] (18) at (-1.5, 0.5) {$H^\dagger$};
	\end{pgfonlayer}
	\begin{pgfonlayer}{edgelayer}
		\draw (15) to (1);
		\draw (15) to (4);
		\draw (4) to (12.center);
		\draw (1) to (16.center);
		\draw (15) to (18);
		\draw (15) to (11);
		\draw (11) to (17.center);
		\draw (18) to (10.center);
		\draw (6) to (2.center);
		\draw (6) to (14.center);
		\draw (6) to (8.center);
		\draw (6) to (7.center);
	\end{pgfonlayer}
\end{tikzpicture}}
\endpgfgraphicnamed&(H2^\prime)\\
\multicolumn{3}{|c}{%
\beginpgfgraphicnamed{RGrelations/p1s}
\begin{tikzpicture}
	\begin{pgfonlayer}{nodelayer}
		\node [style=none] (0) at (0.5, -0.5) {};
		\node [style={H box}] (1) at (-1.25, -0) {$D$};
		\node [style=none] (2) at (-0.5, 0.5) {};
		\node [style=rn] (3) at (0.25, 0.25) {};
		\node [style=gn] (4) at (-0.25, -0.25) {};
		\node [style=none] (5) at (0.75, -0) {$=$};
		\node [style=none] (6) at (-1.25, 0.5) {};
		\node [style=none] (7) at (-0.75, -0) {$:=$};
		\node [style=none] (8) at (-1.25, -0.5) {};
		\node [style=none] (9) at (1.25, 0.75) {};
		\node [style=gn] (10) at (1.25, 0.25) {};
		\node [style=rn] (11) at (1.25, -0.25) {};
		\node [style=none] (12) at (1.25, -0.75) {};
		\node [style=none] (13) at (1.75, -0) {$=$};
		\node [style={H box}] (14) at (2.25, 0.25) {$H$};
		\node [style=none] (15) at (2.75, -0) {$=$};
		\node [style=none] (16) at (2.25, 0.75) {};
		\node [style={H box}] (17) at (3.25, 0.25) {$H^\dagger$};
		\node [style=none] (18) at (3.25, 0.75) {};
		\node [style={H box}] (19) at (3.25, -0.25) {$H^\dagger$};
		\node [style=none] (20) at (3.25, -0.75) {};
		\node [style=none] (21) at (2.25, -0.75) {};
		\node [style={H box}] (22) at (2.25, -0.25) {$H$};
	\end{pgfonlayer}
	\begin{pgfonlayer}{edgelayer}
		\draw [bend left=15, looseness=1.00] (4) to (2.center);
		\draw (6.center) to (1);
		\draw (10) to (9.center);
		\draw (1) to (8.center);
		\draw (4) to (3);
		\draw [bend left=15, looseness=1.00] (3) to (0.center);
		\draw (11) to (12.center);
		\draw [bend right=45, looseness=1.00] (10) to (11);
		\draw [bend left=45, looseness=1.00] (10) to (11);
		\draw (18.center) to (17);
		\draw (20.center) to (19);
		\draw (19) to (17);
		\draw (16.center) to (14);
		\draw (21.center) to (22);
		\draw (22) to (14);
	\end{pgfonlayer}
\end{tikzpicture}}
\endpgfgraphicnamed}&(P1)\\
\hline
\end{array}\]
\end{center}

  \caption{Qutrit ZX-calculus rewriting rules}\label{figure1}
\end{figure}

For the sake of simplicity, we will denote the
frequently used angles $\frac{2 \pi}{3}$ and   $\frac{4 \pi}{3}$ by $1$ and $2$ respectively.

The diagrams in Qutrit ZX-calculus have a standard interpretation (up to a non-zero scalar) $\llbracket \cdot \rrbracket$ in the Hilbert spaces:
\[
\left\llbracket%
\beginpgfgraphicnamed{Qutrits/generator_spider}
\begin{tikzpicture}
	\begin{pgfonlayer}{nodelayer}
		\node [style=none] (0) at (-1, 1) {};
		\node [style=none] (1) at (1, 1) {};
		\node [style=none] (2) at (-0.75, 0.75) {};
		\node [style=none] (3) at (-0.25, 0.75) {};
		\node [style=none] (4) at (0.75, 0.75) {};
		\node [style=none] (5) at (0.25, 0.5) {...};
		\node [style=gsn] (6) at (0, 0) {$\alpha$\nodepart{lower}$\beta$};
		\node [style=none] (7) at (0.25, -0.5) {...};
		\node [style=none] (8) at (-0.75, -0.75) {};
		\node [style=none] (9) at (-0.25, -0.75) {};
		\node [style=none] (10) at (0.75, -0.75) {};
		\node [style=none] (11) at (-1, -1) {};
		\node [style=none] (12) at (1, -1) {};
	\end{pgfonlayer}
	\begin{pgfonlayer}{edgelayer}
		\draw [style=braceedge] (12.center) to node[wire label, inner sep=5 pt]{$m$} (11.center);
		\draw [style=none, bend left=15, looseness=1.00] (8.center) to (6);
		\draw [style=none, bend left=15, looseness=1.00] (9.center) to (6);
		\draw [style=braceedge] (0.center) to node[wire label, inner sep=5 pt]{$n$} (1.center);
		\draw [style=none, bend left=15, looseness=1.00] (6) to (2.center);
		\draw [style=none, bend right=15, looseness=1.00] (10.center) to (6);
		\draw [style=none, bend right=15, looseness=1.00] (6) to (4.center);
		\draw [style=none, bend left=15, looseness=1.00] (6) to (3.center);
	\end{pgfonlayer}
\end{tikzpicture}}
\endpgfgraphicnamed\right\rrbracket=\ket{0}^{\otimes m}\bra{0}^{\otimes n}+e^{i\alpha}\ket{1}^{\otimes m}\bra{1}^{\otimes n}+e^{i\beta}\ket{2}^{\otimes m}\bra{2}^{\otimes n}
\]

\[
\left\llbracket%
\beginpgfgraphicnamed{Qutrits/generator_spider_gray}
\begin{tikzpicture}
	\begin{pgfonlayer}{nodelayer}
		\node [style=none] (0) at (1, -1) {};
		\node [style=none] (1) at (0.25, -0.5) {...};
		\node [style=none] (2) at (1, 1) {};
		\node [style=none] (3) at (0.75, -0.75) {};
		\node [style=none] (4) at (-0.25, -0.75) {};
		\node [style=none] (5) at (-0.25, 0.75) {};
		\node [style=none] (6) at (0.25, 0.5) {...};
		\node [style=none] (7) at (-0.75, -0.75) {};
		\node [style=rsn] (8) at (0, 0) {$\alpha$\nodepart{lower}$\beta$};
		\node [style=none] (9) at (-1, 1) {};
		\node [style=none] (10) at (-0.75, 0.75) {};
		\node [style=none] (11) at (-1, -1) {};
		\node [style=none] (12) at (0.75, 0.75) {};
	\end{pgfonlayer}
	\begin{pgfonlayer}{edgelayer}
		\draw [style=braceedge] (0.center) to node[wire label, inner sep=5 pt]{$m$} (11.center);
		\draw [style=none, bend left=15, looseness=1.00] (7.center) to (8);
		\draw [style=none, bend left=15, looseness=1.00] (4.center) to (8);
		\draw [style=braceedge] (9.center) to node[wire label, inner sep=5 pt]{$n$} (2.center);
		\draw [style=none, bend left=15, looseness=1.00] (8) to (10.center);
		\draw [style=none, bend right=15, looseness=1.00] (3.center) to (8);
		\draw [style=none, bend right=15, looseness=1.00] (8) to (12.center);
		\draw [style=none, bend left=15, looseness=1.00] (8) to (5.center);
	\end{pgfonlayer}
\end{tikzpicture}}
\endpgfgraphicnamed\right\rrbracket=\ket{+}^{\otimes m}\bra{+}^{\otimes n}+e^{i\alpha}\ket{\omega}^{\otimes m}\bra{\omega}^{\otimes n}+e^{i\beta}\ket{\bar{\omega}}^{\otimes m}\bra{\bar{\omega}}^{\otimes n}
\]

 \begin{equation*}
\left\llbracket%
\beginpgfgraphicnamed{RGgenerator/RGg_Hada}
}
\endpgfgraphicnamed\right\rrbracket=\ket{+}\bra{0}+ \ket{\omega}\bra{1}+\ket{\bar{\omega}}\bra{2}\quad\quad
\left\llbracket%
\beginpgfgraphicnamed{RGgenerator/RGg_Hadad}
}
\endpgfgraphicnamed\right\rrbracket=\ket{0}\bra{+}+ \ket{1}\bra{\omega}+\ket{2}\bra{\bar{\omega}}
\end{equation*}

where $\bar{\omega}=e^{\frac{4}{3}\pi i}=\omega^2$, $ \ket{+}  =  \ket{0}+\ket{1}+\ket{2},  \ket{\omega}  =   \ket{0}+\omega \ket{1}+\bar{\omega}\ket{2}$, and $ \ket{\bar{\omega}}  =  \ket{0}+\bar{\omega}\ket{1}+\omega\ket{2}$.

For convenience, we also use the following matrix form:
\begin{trivlist}\item
    \begin{minipage}{0.495\textwidth}
      \begin{equation}
      \label{phasematrix}
  \left\llbracket%
\beginpgfgraphicnamed{RGgenerator/RGg_zph_ab}
\begin{tikzpicture}
	\begin{pgfonlayer}{nodelayer}
		\node [style=gsn] (0) at (0, -0) {$\alpha$\nodepart{lower}$\beta$};
		\node [style=none] (1) at (0, 0.5) {};
		\node [style=none] (2) at (0, -0.5) {};
	\end{pgfonlayer}
	\begin{pgfonlayer}{edgelayer}
		\draw (1.center) to (0);
		\draw (2.center) to (0);
	\end{pgfonlayer}
\end{tikzpicture}}
\endpgfgraphicnamed\right\rrbracket=
  \left(
\begin{array}{ccc}
 1 & 0 & 0 \\
 0 & e^{i\alpha} & 0\\
 0 & 0 & e^{i\beta}
\end{array}
\right)
\end{equation}
    \end{minipage}
   \begin{minipage}{0.495\textwidth}
      \begin{equation}
    \label{phasematrix3}
\left\llbracket%
\beginpgfgraphicnamed{RGgenerator/RGg_Hada}
}
\endpgfgraphicnamed\right\rrbracket=\begin{pmatrix}
        1 & 1 & 1 \\
        1 & \omega & \bar{\omega}\\
        1 & \bar{\omega} & \omega
 \end{pmatrix}
\end{equation}
    \end{minipage}
  \end{trivlist}
  \begin{equation}\label{phasematrix2}
\left\llbracket%
\beginpgfgraphicnamed{RGgenerator/RGg_xph_ab}
\begin{tikzpicture}
	\begin{pgfonlayer}{nodelayer}
		\node [style=rsn] (0) at (0, -0) {$\alpha$\nodepart{lower}$\beta$};
		\node [style=none] (1) at (0, 0.5) {};
		\node [style=none] (2) at (0, -0.5) {};
	\end{pgfonlayer}
	\begin{pgfonlayer}{edgelayer}
		\draw (1.center) to (0);
		\draw (2.center) to (0);
	\end{pgfonlayer}
\end{tikzpicture}}
\endpgfgraphicnamed\right\rrbracket=
  \left(
\begin{array}{ccc}
1+e^{i\alpha}+e^{i\beta} & 1+\bar{\omega} e^{i\alpha}+\omega e^{i\beta}& 1+\omega e^{i\alpha}+\bar{\omega}e^{i\beta} \\
1+\omega e^{i\alpha}+\bar{\omega}e^{i\beta}& 1+e^{i\alpha}+e^{i\beta} & 1+\bar{\omega}e^{i\alpha}+\omega e^{i\beta}\\
 1+\bar{\omega} e^{i\alpha}+\omega e^{i\beta}&1+\omega e^{i\alpha}+\bar{\omega}e^{i\beta} &1+e^{i\alpha}+e^{i\beta}
\end{array}
\right)
  \end{equation}

Like the qubit case, there are three important properties about the qutrit ZX-calculus, i.e. universality,  soundness, and completeness: Universality is about if there exists a ZX-calculus diagram for every corresponding linear map in Hilbert spaces under the standard interpretation. Soundness means that all the rules in the qutrit ZX-calculus have a correct standard interpretation in the  Hilbert spaces. Completeness is concerned with whether  an equation of diagrams can be derived in the ZX-calculus when their corresponding equation of linear maps under the standard interpretation holds true. 

Among these properties, universality is proved in \cite{BianWang2}. Soundness can easily be checked with the standard interpretation $\llbracket \cdot \rrbracket$.  Completeness has a negative answer in the qubit case \cite{Vladimir} for the overall QM, while for general pure qutrit QM, we do not have an answer at the moment, although we conjecture that  it is incomplete. 


\subsection{Qutrit stabilizer quantum mechanics in  the ZX-calculus}
In this subsection, we represent in ZX-calculus the qutrit stabilizer QM, which consists of  state preparations and measurements based on the computational basis $\{ \ket{0},  \ket{1},  \ket{2}\}$, as well as unitary operators belonging to the generalized Clifford group $\mathcal{C}_n$. Furthermore, we give the unique form for single qutrit Clifford operators.

 Firstly, the states and effects in computational basis can be represented as:

\[
\ket{0} =\left\llbracket%
\beginpgfgraphicnamed{RGgenerator/RGg_exd}
\begin{tikzpicture}
	\begin{pgfonlayer}{nodelayer}
		\node [style=none] (0) at (0, -0.25) {};
		\node [style=rn] (1) at (0, 0.25) {};
	\end{pgfonlayer}
	\begin{pgfonlayer}{edgelayer}
		\draw (0.center) to (1);
	\end{pgfonlayer}
\end{tikzpicture}}
\endpgfgraphicnamed\right\rrbracket, \quad
\ket{1} =\left\llbracket%
\beginpgfgraphicnamed{Qutrits/rdot1}
\begin{tikzpicture}
	\begin{pgfonlayer}{nodelayer}
		\node [style=rsn] (0) at (0, 0.25) {$2$\nodepart{lower}$1$};
		\node [style=none] (1) at (0, -0.25) {};
	\end{pgfonlayer}
	\begin{pgfonlayer}{edgelayer}
		\draw (1.center) to (0);
	\end{pgfonlayer}
\end{tikzpicture}}
\endpgfgraphicnamed\right\rrbracket, \quad
\ket{2} =\left\llbracket%
\beginpgfgraphicnamed{Qutrits/rdot2}
\begin{tikzpicture}
	\begin{pgfonlayer}{nodelayer}
		\node [style=rsn] (0) at (0, 0.25) {$1$\nodepart{lower}$2$};
		\node [style=none] (1) at (0, -0.25) {};
	\end{pgfonlayer}
	\begin{pgfonlayer}{edgelayer}
		\draw (1.center) to (0);
	\end{pgfonlayer}
\end{tikzpicture}}
\endpgfgraphicnamed\right\rrbracket, \quad
\bra{0}=\left\llbracket%
\beginpgfgraphicnamed{RGgenerator/RGg_ex}
\begin{tikzpicture}
	\begin{pgfonlayer}{nodelayer}
		\node [style=none] (0) at (0, 0.25) {};
		\node [style=rn] (1) at (0, -0.25) {};
	\end{pgfonlayer}
	\begin{pgfonlayer}{edgelayer}
		\draw (0.center) to (1);
	\end{pgfonlayer}
\end{tikzpicture}}
\endpgfgraphicnamed\right\rrbracket,\quad
\bra{1} =\left\llbracket%
\beginpgfgraphicnamed{Qutrits/rcdot1}
\begin{tikzpicture}
	\begin{pgfonlayer}{nodelayer}
		\node [style=none] (0) at (0, 0.25) {};
		\node [style=rsn] (1) at (0, -0.25) {$1$\nodepart{lower}$2$};
	\end{pgfonlayer}
	\begin{pgfonlayer}{edgelayer}
		\draw (0.center) to (1);
	\end{pgfonlayer}
\end{tikzpicture}}
\endpgfgraphicnamed\right\rrbracket, \quad
\bra{2} =\left\llbracket%
\beginpgfgraphicnamed{Qutrits/rcdot2}
\begin{tikzpicture}
	\begin{pgfonlayer}{nodelayer}
		\node [style=none] (0) at (0, 0.25) {};
		\node [style=rsn] (1) at (0, -0.25) {$2$\nodepart{lower}$1$};
	\end{pgfonlayer}
	\begin{pgfonlayer}{edgelayer}
		\draw (0.center) to (1);
	\end{pgfonlayer}
\end{tikzpicture}}
\endpgfgraphicnamed\right\rrbracket.
\]
The generators of the single  Clifford group $\mathcal{C}_n$  are denoted by the following diagrams:

\begin{equation}\label{clifgenerator}
\mathcal{S}=\left\llbracket%
\beginpgfgraphicnamed{Qutrits/phasegts}
\begin{tikzpicture}
	\begin{pgfonlayer}{nodelayer}
		\node [style=none] (0) at (0, 0.5) {};
		\node [style=none] (1) at (0, -0.5) {};
		\node [style=gsn] (2) at (0, 0) {$0$\nodepart{lower}$1$};
	\end{pgfonlayer}
	\begin{pgfonlayer}{edgelayer}
		\draw (0.center) to (2);
		\draw (1.center) to (2);
	\end{pgfonlayer}
\end{tikzpicture}}
\endpgfgraphicnamed\right\rrbracket,
\qquad
H=\left\llbracket%
\beginpgfgraphicnamed{RGgenerator/RGg_Hada}
}
\endpgfgraphicnamed\right\rrbracket, 
\qquad
\Lambda=\left\llbracket%
\beginpgfgraphicnamed{Qutrits/sumgt}
\begin{tikzpicture}
	\begin{pgfonlayer}{nodelayer}
		\node [style=none] (0) at (0.25, 0.5) {};
		\node [style=none] (1) at (-0.5, 0.5) {};
		\node [style=none] (2) at (-0.5, -0.5) {};
		\node [style=gn] (3) at (-0.5, 0.25) {};
		\node [style=rn] (4) at (0.25, -0.25) {};
		\node [style=none] (5) at (0.25, -0.5) {};
	\end{pgfonlayer}
	\begin{pgfonlayer}{edgelayer}
		\draw (2.center) to (3);
		\draw (4) to (0.center);
		\draw (5.center) to (4);
		\draw (3) to (1.center);
		\draw (3) to (4);
	\end{pgfonlayer}
\end{tikzpicture}}
\endpgfgraphicnamed\right\rrbracket.
\end{equation}

Now we characterise the diagrams for qutrit stabilizer QM.
  
\begin{theorem}
A diagram in the qutrit ZX-calculus stands for a linear map in the qutrit stabilizer QM under standard interpretation if and only if  all phase angles in this diagram are integer multiples of $\frac{2}{3}\pi$.
\end{theorem}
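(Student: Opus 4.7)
The plan is to prove the two directions of the equivalence separately: sufficiency by structural induction on the diagram, and necessity by exhibiting an explicit diagrammatic representative for each generator of the stabilizer fragment of qutrit QM.

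For sufficiency, assume every phase in a diagram $D$ lies in $\{0, \frac{2}{3}\pi, \frac{4}{3}\pi\}$ and check each generator. The generators $H$, $H^{\dagger}$, $\sigma$, $\mathbb{I}$, and the empty diagram $e$ are manifestly stabilizer. A Z-spider $R_Z^{(n,m)}$ with both phases in $\{0,\frac{2}{3}\pi,\frac{4}{3}\pi\}$ interprets to a map of the form $\sum_{j=0}^{2}\lambda_j\ket{j}^{\otimes m}\bra{j}^{\otimes n}$ with $\lambda_j\in\{1,\omega,\bar{\omega}\}$. This factors as a Z-basis copy-and-contract map post-composed with a power of $\mathcal{S}$ on one leg. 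The copy/contract building blocks are assembled from the Clifford gate $\Lambda$ and the computational basis preparation/effect $\ket{0},\bra{0}$ (for instance, the arity-$(1,2)$ copy equals $\Lambda\circ(I\otimes\ket{0})$, as is immediate from the definition of $\Lambda$), so the whole Z-spider is stabilizer. The X-spider is then stabilizer because it is the Hadamard conjugate of the corresponding Z-spider. Closure of the stabilizer fragment under $\otimes$ and $\circ$ gives by induction on the structure of $D$ that $\llbracket D \rrbracket$ is stabilizer.

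For necessity, every stabilizer linear map is a composition of Cliffords (generated by $\mathcal{S}$, $H$, and $\Lambda$) with preparations $\ket{j}$ and effects $\bra{j}$ for $j\in\mathbb{Z}_3$. I would exhibit for each of these a diagram whose phases all lie in $\{0,\frac{2}{3}\pi,\frac{4}{3}\pi\}$. Equation \eqref{clifgenerator} already displays such diagrams for $\mathcal{S}$ (a Z-spider with phases $(0,\frac{2}{3}\pi)$), for $H$ (a built-in generator with no phase), and for $\Lambda$ (the Z-spider of arity $(1,2)$ with zero phases composed with the X-spider of arity $(2,1)$ with zero phases, joined along one wire; a direct check using \eqref{phasematrix} and \eqref{phasematrix2} shows the composite equals $\Lambda$ up to the non-zero scalar $3$). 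The computational basis states are realised as X-spider units: the Fourier inversion $3\ket{j} = \ket{+} + \bar{\omega}^{j}\ket{\omega} + \omega^{j}\ket{\bar{\omega}}$ shows that $\ket{0},\ket{1},\ket{2}$ are, up to a non-zero scalar, $R_X^{(0,1)}$-spiders with phase pairs $(0,0)$, $(\frac{4}{3}\pi,\frac{2}{3}\pi)$, and $(\frac{2}{3}\pi,\frac{4}{3}\pi)$ respectively; the effects $\bra{j}$ are handled dually. Assembling these by $\otimes$ and $\circ$ then yields, for every stabilizer map, a diagram whose phases are confined to $\{0,\frac{2}{3}\pi,\frac{4}{3}\pi\}$.

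The main technical obstacle is verifying the SUM-gate decomposition into zero-phase spiders; this reduces to a direct matrix computation via \eqref{phasematrix} and \eqref{phasematrix2}, which is elementary but somewhat tedious because of the off-diagonal form of the X-spider matrix. A secondary concern is bookkeeping the non-zero scalars discarded by the ``up to a non-zero scalar'' convention of the standard interpretation; since all coefficients produced in the above computations are non-trivial sums of third roots of unity, none of them vanishes, so the convention causes no issue.
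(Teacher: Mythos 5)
The paper states this theorem without giving any proof, so there is no official argument to compare against; judged on its own, your proposal is correct and is precisely the argument the paper's setup suggests: the diagrams displayed for $\ket{j}$, $\bra{j}$, $\mathcal{S}$, $H$ and $\Lambda$ around \eqref{clifgenerator} supply the direction that every stabilizer map is denoted by a diagram with all phases in $\frac{2}{3}\pi\,\mathbb{Z}$, and closure of the stabilizer fragment under $\circ$ and $\otimes$ gives the converse by structural induction. Your computations check out: $\Lambda\circ(I\otimes\ket{0})$ is indeed the $Z$-copy map, the green-copy/gray-add composite equals $3\Lambda$, and the expansion $3\ket{j}=\ket{+}+\bar{\omega}^{j}\ket{\omega}+\omega^{j}\ket{\bar{\omega}}$ correctly identifies the basis states with $X$-spiders carrying the phase pairs you list (one small slip: powers of $\mathcal{S}$ alone give only $\mathrm{diag}(1,1,\omega^{c})$, so to realize a general stabilizer diagonal $\mathrm{diag}(1,\omega^{a},\omega^{b})$ you need $Z^{a}\mathcal{S}^{b-2a}$, which is still fine since the Pauli $Z$ is Clifford). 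One further remark worth making explicit in your write-up: read literally, the ``only if'' direction of the theorem is false --- a diagram containing two cancelling $Z$-phase gates with arbitrary $\alpha$ denotes the identity, a stabilizer map, yet has non-stabilizer phases --- and what you actually proved is the intended, meaningful statement that the set of linear maps denoted by diagrams whose phases all lie in $\frac{2}{3}\pi\,\mathbb{Z}$ coincides with the set of qutrit stabilizer maps; this is also the reading the paper itself relies on when it immediately goes on to define ``stabilizer diagram'' by this phase condition.
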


In the sequel, a diagram in which all phase angles are integer multiples of $\frac{2}{3}\pi$ will be called a  \textit{ stabilizer diagram}. Now we denote $\frac{2 \pi}{3}$ and  $\frac{4 \pi}{3}$  by $1$ and $2$ (or $-1$)  respectively, and let 
 $$\mathcal{P}= \{\nststile{1}{1}, \nststile{2}{2}\}, \quad \mathcal{N}=\{\nststile{0}{1},  \nststile{1}{0}, \nststile{0}{2},  \nststile{2}{0}\}, \quad \mathcal{M}=\{\nststile{0}{0},  \nststile{1}{2}, \nststile{2}{1}\}, \quad \mathcal{Q}=\mathcal{P} \cup \mathcal{N}, \quad \mathcal{A}=\mathcal{Q} \cup \mathcal{M},$$
where the symbol $\nststile{b}{a}$ is just the denotation of the pair $(a, b)$, instead of the fraction $\frac{a}{b}$. 
Then each green or red node in a stabilizer diagram has its phase angles denoted as elements in the set  $\mathcal{A}$.  Define the addition $+$ in  $\mathcal{A}$ as $\nststile{b}{a}+\nststile{d}{c} ~:=~ \nststile{b+d (mod 3)}{a+c (mod3)}$. Then $\mathcal{A}$ is an abelian group, $\mathcal{P}\cup \{\nststile{0}{0}\}~ \mbox{and} ~\mathcal{M}$ are subgroups of $\mathcal{A}$.


Unlike the quibit case, the fact that the ZX-calculus is complete for the single qutrit Clifford group $\mathcal{C}_1$ is far from trivial to be proved:

\begin{theorem}\label{uniformsin}
Each diagram of single qutrit Clifford operator can be uniquely rewritten into one of the following forms: 
\begin{trivlist}\item
    \begin{minipage}{0.295\textwidth}
      \begin{equation}
        \label{uniformsin1}
\beginpgfgraphicnamed{Qutrits/cliffnorformj}
\begin{tikzpicture}
	\begin{pgfonlayer}{nodelayer}
		\node [style=none] (0) at (0, -0.75) {};
		\node [style=gsn] (1) at (0, 0.5) {$a_1$\nodepart{lower}$a_2$};
		\node [style=none] (2) at (0, 1) {};
		\node [style=rsn] (3) at (0, -0.25) {$a_3$\nodepart{lower}$a_4$};
	\end{pgfonlayer}
	\begin{pgfonlayer}{edgelayer}
		\draw (2.center) to (1);
		\draw (0.center) to (1);
	\end{pgfonlayer}
\end{tikzpicture}}
\endpgfgraphicnamed    
\end{equation}
    \end{minipage}
   \begin{minipage}{0.295\textwidth}
      \begin{equation}
        \label{uniformsin2}
\beginpgfgraphicnamed{Qutrits/cliffnorformk}
\begin{tikzpicture}
	\begin{pgfonlayer}{nodelayer}
		\node [style=gsn] (0) at (0, 0.5) {$q_1$\nodepart{lower}$q_2$};
		\node [style=none] (1) at (0, -0.75) {};
		\node [style=rsn] (2) at (0, -0.25) {$a_5$\nodepart{lower}$a_6$};
		\node [style=none] (3) at (0, 1.75) {};
		\node [style=rsn] (4) at (0, 1.25) {$p_1$\nodepart{lower}$p_2$};
	\end{pgfonlayer}
	\begin{pgfonlayer}{edgelayer}
		\draw (3.center) to (0);
		\draw (1.center) to (0);
	\end{pgfonlayer}
\end{tikzpicture}}
\endpgfgraphicnamed    
\end{equation}
    \end{minipage}
  \begin{minipage}{0.295\textwidth}
      \begin{equation}
        \label{uniformsin3}
\beginpgfgraphicnamed{Qutrits/cliffnorformt}
\begin{tikzpicture}
	\begin{pgfonlayer}{nodelayer}
		\node [style=none] (0) at (0, -1) {};
		\node [style=none] (1) at (0, 1.75) {};
		\node [style=gsn] (2) at (0, 1.25) {$m_1$\nodepart{lower}$m_2$};
		\node [style=rsn] (3) at (0, 0.5) {$a_7$\nodepart{lower}$a_8$};
		\node [style=H box] (4) at (0, -0.5) {$H$};
		\node [style=H box] (5) at (0, 0) {$H$};
	\end{pgfonlayer}
	\begin{pgfonlayer}{edgelayer}
		\draw (1.center) to (3);
		\draw (0.center) to (3);
	\end{pgfonlayer}
\end{tikzpicture}}
\endpgfgraphicnamed    
\end{equation}
    \end{minipage}
  \end{trivlist}

where $\nststile{a_2}{a_1}, \nststile{a_4}{a_3}, \nststile{a_6}{a_5}, \nststile{a_8}{a_7} \in \mathcal{A}, \nststile{p_2}{p_1} \in \mathcal{P}, \nststile{q_2}{q_1} \in \mathcal{Q}, \nststile{m_2}{m_1} \in \mathcal{M}$. 
\end{theorem}


\section{Qutrit graph states in the ZX-calculus}

\subsection{Stabilizer state diagram and transformations of GS-LC diagrams }
To represent qutrit graph states in the ZX-calculus, we first show that the horizontal  nodes $H$ and $H^{\dagger}$  make sense when connected to two green nodes. 

\begin{lemma}\cite{GongWang}\label{controlz}
\begin{align*}
\beginpgfgraphicnamed{Qutrits/controlz}
\begin{tikzpicture}
	\begin{pgfonlayer}{nodelayer}
		\node [style=none] (0) at (-0.5, 1) {};
		\node [style=gn] (1) at (-0.5, 0.5) {};
		\node [style=H box] (2) at (0, 0) {$H$};
		\node [style=gn] (3) at (0.5, -0.5) {};
		\node [style=none] (4) at (-0.5, -1) {};
		\node [style=none] (5) at (0.5, 1) {};
		\node [style=none] (6) at (0.5, -1) {};
	\end{pgfonlayer}
	\begin{pgfonlayer}{edgelayer}
		\draw (5.center) to (6.center);
		\draw (0.center) to (4.center);
		\draw (1) to (2);
		\draw (2) to (3);
	\end{pgfonlayer}
\end{tikzpicture}}
\endpgfgraphicnamed=%
\beginpgfgraphicnamed{Qutrits/controlz2}
\begin{tikzpicture}
	\begin{pgfonlayer}{nodelayer}
		\node [style=none] (0) at (1.5, 0.75) {};
		\node [style={H box}] (1) at (1, -0.25) {$H$};
		\node [style=none] (2) at (1.5, -1.25) {};
		\node [style=none] (3) at (0.5, 0.75) {};
		\node [style=gn] (4) at (1.5, 0.25) {};
		\node [style=none] (5) at (0.5, -1.25) {};
		\node [style=gn] (6) at (0.5, -0.75) {};
	\end{pgfonlayer}
	\begin{pgfonlayer}{edgelayer}
		\draw (0.center) to (2.center);
		\draw (3.center) to (5.center);
		\draw (6) to (1);
		\draw (1) to (4);
	\end{pgfonlayer}
\end{tikzpicture}}
\endpgfgraphicnamed=: %
\beginpgfgraphicnamed{Qutrits/controlz3}
\begin{tikzpicture}
	\begin{pgfonlayer}{nodelayer}
		\node [style=none] (0) at (-0.5, 0.75) {};
		\node [style=gn] (1) at (-0.5, -0) {};
		\node [style={H box}] (2) at (0, -0) {$H$};
		\node [style=none] (3) at (0.5, 0.75) {};
		\node [style=gn] (4) at (0.5, -0) {};
		\node [style=none] (5) at (-0.5, -0.75) {};
		\node [style=none] (6) at (0.5, -0.75) {};
	\end{pgfonlayer}
	\begin{pgfonlayer}{edgelayer}
		\draw (3.center) to (6.center);
		\draw (0.center) to (5.center);
		\draw (1) to (2);
		\draw (2) to (4);
	\end{pgfonlayer}
\end{tikzpicture}}
\endpgfgraphicnamed,\quad\quad\quad\quad
%
\beginpgfgraphicnamed{Qutrits/controlzsquare}
\begin{tikzpicture}
	\begin{pgfonlayer}{nodelayer}
		\node [style=none] (0) at (1.5, 0.75) {};
		\node [style=none] (1) at (0.5, -1.25) {};
		\node [style=none] (2) at (1.5, -1.25) {};
		\node [style=gn] (3) at (0.5, 0.25) {};
		\node [style=gn] (4) at (1.5, -0.75) {};
		\node [style=none] (5) at (0.5, 0.75) {};
		\node [style={H box}] (6) at (1, -0.25) {$H^{\dagger}$};
	\end{pgfonlayer}
	\begin{pgfonlayer}{edgelayer}
		\draw (0.center) to (2.center);
		\draw (5.center) to (1.center);
		\draw (3) to (6);
		\draw (6) to (4);
	\end{pgfonlayer}
\end{tikzpicture}}
\endpgfgraphicnamed=%
\beginpgfgraphicnamed{Qutrits/controlzsquare2}
\begin{tikzpicture}
	\begin{pgfonlayer}{nodelayer}
		\node [style=gn] (0) at (0.5, -0.75) {};
		\node [style=gn] (1) at (1.5, 0.25) {};
		\node [style=none] (2) at (0.5, 0.75) {};
		\node [style=none] (3) at (0.5, -1.25) {};
		\node [style={H box}] (4) at (1, -0.25) {$H^{\dagger}$};
		\node [style=none] (5) at (1.5, 0.75) {};
		\node [style=none] (6) at (1.5, -1.25) {};
	\end{pgfonlayer}
	\begin{pgfonlayer}{edgelayer}
		\draw (5.center) to (6.center);
		\draw (2.center) to (3.center);
		\draw (0) to (4);
		\draw (4) to (1);
	\end{pgfonlayer}
\end{tikzpicture}}
\endpgfgraphicnamed=: %
\beginpgfgraphicnamed{Qutrits/controlzsquare3}
\begin{tikzpicture}
	\begin{pgfonlayer}{nodelayer}
		\node [style=none] (0) at (-0.5, 0.75) {};
		\node [style=gn] (1) at (0.5, -0) {};
		\node [style=gn] (2) at (-0.5, -0) {};
		\node [style=none] (3) at (-0.5, -0.75) {};
		\node [style=none] (4) at (0.5, 0.75) {};
		\node [style=none] (5) at (0.5, -0.75) {};
		\node [style={H box}] (6) at (0, -0) {$H^{\dagger}$};
	\end{pgfonlayer}
	\begin{pgfonlayer}{edgelayer}
		\draw (4.center) to (5.center);
		\draw (0.center) to (3.center);
		\draw (2) to (6);
		\draw (6) to (1);
	\end{pgfonlayer}
\end{tikzpicture}}
\endpgfgraphicnamed.
\end{align*}
\end{lemma}

\begin{proof}
We will only prove the first equation.
\begin{align*}
\beginpgfgraphicnamed{Qutrits/controlzprf}
\begin{tikzpicture}
	\begin{pgfonlayer}{nodelayer}
		\node [style=none] (0) at (-6, -0) {};
		\node [style=gn] (1) at (-6, 0.5) {};
		\node [style=gn] (2) at (-7, 1.5) {};
		\node [style=none] (3) at (-7, 2) {};
		\node [style=none] (4) at (-7, -0) {};
		\node [style=none] (5) at (-6, 2) {};
		\node [style={H box}] (6) at (-6.5, 1) {$H$};
		\node [style={H box}] (7) at (-4.5, 1) {$H$};
		\node [style=none] (8) at (-3.5, -0.75) {};
		\node [style=none] (9) at (-5, -0) {};
		\node [style=gn] (10) at (-5, 1.5) {};
		\node [style=none] (11) at (-5, 2) {};
		\node [style=none] (12) at (-3.5, 1.5) {};
		\node [style=none] (13) at (-3, 0.75) {$=$};
		\node [style=none] (14) at (1.5, 0.75) {$=$};
		\node [style=none] (15) at (3.25, 0.75) {$=$};
		\node [style=none] (16) at (-5.5, 0.75) {$=$};
		\node [style=none] (17) at (-0.5, 0.75) {$=$};
		\node [style={H box}] (18) at (-4, 0.5) {$H$};
		\node [style={H box}] (19) at (-3.5, 0.75) {$H$};
		\node [style={H box}] (20) at (-3.5, -0.5) {$H^\dagger$};
		\node [style={H box}] (21) at (-1, -0) {$H^\dagger$};
		\node [style={H box}] (22) at (-1, 1.25) {$H$};
		\node [style=none] (23) at (-1, -0.5) {};
		\node [style=none] (24) at (-1, 2) {};
		\node [style=none] (25) at (-2.5, 1.75) {};
		\node [style=gn] (26) at (-2.5, 1.25) {};
		\node [style=none] (27) at (-2.5, -0.25) {};
		\node [style=rn] (28) at (-1.5, 1.25) {};
		\node [style=gn] (29) at (-2, 0.5) {};
		\node [style=none] (30) at (-2.5, 1.25) {};
		\node [style=rn] (31) at (-1, 0.5) {};
		\node [style=rn] (32) at (-3.5, -0) {};
		\node [style=rn] (33) at (1, 1) {};
		\node [style={H box}] (34) at (1, 0.25) {$H^\dagger$};
		\node [style=gn] (35) at (0.5, 0.5) {};
		\node [style={H box}] (36) at (1, 1.5) {$H$};
		\node [style=none] (37) at (0, 1.75) {};
		\node [style=gn] (38) at (0, 1.25) {};
		\node [style=none] (39) at (0, 1.25) {};
		\node [style=none] (40) at (1, -0.25) {};
		\node [style=none] (41) at (0, -0.25) {};
		\node [style=none] (42) at (1, 2) {};
		\node [style=gn] (43) at (2, 0.25) {};
		\node [style=none] (44) at (2.75, -0.25) {};
		\node [style=none] (45) at (2, 1.75) {};
		\node [style=rn] (46) at (2.75, 1) {};
		\node [style={H box}] (47) at (2.75, 1.5) {$H$};
		\node [style=none] (48) at (2.75, 2) {};
		\node [style=none] (49) at (2, -0.25) {};
		\node [style={H box}] (50) at (2.75, 0.25) {$H^\dagger$};
		\node [style=none] (51) at (3.75, 1.75) {};
		\node [style=none] (52) at (4.75, 2.5) {};
		\node [style=gn] (53) at (4.75, 1.25) {};
		\node [style=none] (54) at (3.75, -0.25) {};
		\node [style=none] (55) at (4.75, -0.25) {};
		\node [style={H box}] (56) at (4.25, 0.75) {$H$};
		\node [style=gn] (57) at (3.75, 0.25) {};
		\node [style={H box}] (58) at (4.75, 0.75) {$H$};
		\node [style={H box}] (59) at (4.75, 0.25) {$H^\dagger$};
		\node [style={H box}] (60) at (4.75, 1.75) {$H^\dagger$};
		\node [style={H box}] (61) at (4.75, 2.25) {$H$};
		\node [style=none] (62) at (5.25, 0.75) {$=$};
		\node [style=none] (63) at (5.75, 1.75) {};
		\node [style=none] (64) at (6.75, 1.75) {};
		\node [style=gn] (65) at (6.75, 1.25) {};
		\node [style=none] (66) at (5.75, -0.25) {};
		\node [style=none] (67) at (6.75, -0.25) {};
		\node [style={H box}] (68) at (6.25, 0.75) {$H$};
		\node [style=gn] (69) at (5.75, 0.25) {};
	\end{pgfonlayer}
	\begin{pgfonlayer}{edgelayer}
		\draw (5.center) to (0.center);
		\draw (3.center) to (4.center);
		\draw (2) to (6);
		\draw (6) to (1);
		\draw (12.center) to (8.center);
		\draw (11.center) to (9.center);
		\draw (10) to (7);
		\draw (24.center) to (23.center);
		\draw (25.center) to (27.center);
		\draw [bend left, looseness=1.00] (29) to (30.center);
		\draw (29) to (28);
		\draw (28) to (31);
		\draw (7) to (32);
		\draw (42.center) to (40.center);
		\draw (37.center) to (41.center);
		\draw [bend left, looseness=1.00] (35) to (39.center);
		\draw (33) to (35);
		\draw (48.center) to (44.center);
		\draw (45.center) to (49.center);
		\draw (43) to (46);
		\draw (52.center) to (55.center);
		\draw (51.center) to (54.center);
		\draw (57) to (56);
		\draw (56) to (53);
		\draw (64.center) to (67.center);
		\draw (63.center) to (66.center);
		\draw (69) to (68);
		\draw (68) to (65);
	\end{pgfonlayer}
\end{tikzpicture}}
\endpgfgraphicnamed,
 \end{align*}
where we used (H2) for the first and the fifth equalities, and (P1) for the second equality. 
\end{proof}

Now qutrit graph states have a nice representation in the ZX-calculus.

\begin{lemma}\cite{GongWang}
A qutrit graph state $\ket{G}$, where $G = (E;V)$ is an n-vertex graph, is represented in the graphical
calculus as follows:
\begin{itemize}
 \item  for each vertex $v \in V$, a green node with one output, and
\item  for each $1$-weighted edge $\{u,v\}\in E$, an $H$ node connected to the green nodes representing vertices $u$
and $v$,
\item  for each $2$-weighted edge $\{u,v\}\in E$, an $H^{\dagger}$ node connected to the green nodes representing vertices $u$
and $v$.

\end{itemize}
\end{lemma}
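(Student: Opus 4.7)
The plan is to proceed by induction on the number of edges $|E|$, starting from the defining form $\ket{G}=\mathcal{U}\ket{+}^{\otimes n}$ given in Definition 2.3. For the base case $|E|=0$, we have $\mathcal{U}=I$ and $\ket{G}=\ket{+}^{\otimes n}$. A direct computation under the standard interpretation shows that a green spider with no inputs, one output, and zero phases evaluates to $\ket{0}+\ket{1}+\ket{2}=\sqrt{3}\ket{+}$; since diagrams are interpreted only up to a non-zero scalar, $n$ disjoint green nodes each with one output represent $\ket{+}^{\otimes n}$, matching the claimed diagram for the edgeless graph.

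For the inductive step, suppose the result holds for the graph $G'$ obtained by removing one edge $e=\{u,v\}$ of weight $w=\Gamma_{uv}\in\{1,2\}$ from $G$. Then $\ket{G}=C_{uv}^{w}\ket{G'}$ by the definition of $\mathcal{U}$. Starting from the diagram for $G'$, I would use the spider fusion rule (S1) to split off an extra leg from each of the green spiders at $u$ and $v$ (fusion is an equality in both directions, so this preserves the denotation), and then insert an $H$ node (if $w=1$) or an $H^{\dagger}$ node (if $w=2$) between these new legs. The key tensor-network calculation to verify is that the subdiagram ``green spider with external output and internal leg -- $H$ -- green spider with external output and internal leg'' evaluates under $\llbracket\cdot\rrbracket$ to $\sum_{i,j}\omega^{ij}\ket{i}_u\ket{j}_v$ up to scalar, which equals $C_{uv}\ket{+}_u\ket{+}_v$; the analogous calculation with $H^{\dagger}$ gives $C_{uv}^{2}\ket{+}_u\ket{+}_v$. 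Thus inserting the $H$ or $H^{\dagger}$ has exactly the effect of applying $C_{uv}^{w}$ to the $u,v$ outputs of the diagram for $G'$, yielding a diagram for $G$ as claimed.

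The main obstacle is the bookkeeping caused by the non-self-adjointness of the qutrit $H$. Lemma 3.6 is precisely what removes this obstruction: it asserts that both orientations of the $H$ (respectively $H^{\dagger}$) node, when placed between two green spiders, yield the same diagram, so the added edges can be treated as undirected, matching the undirected graph $G$. A secondary technicality is keeping track of the arity of a green spider at a vertex that is incident to many edges; this is handled uniformly by repeated application of (S1), so that a single green spider at vertex $v$ ends up with one external output together with one internal leg per edge incident to $v$, and each such internal leg is attached via the appropriate $H$ or $H^{\dagger}$ node to a matching leg at the other endpoint.
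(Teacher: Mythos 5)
Your proof is correct in substance, but there is nothing in the paper to compare it against: the paper states this lemma with a citation to \cite{GongWang} and gives no proof of its own; the only in-paper preparation is Lemma \ref{controlz}, which establishes that an $H$ (or $H^{\dagger}$) node joining two green nodes can be read in either orientation, so that such edges are well-defined as undirected. Your induction on $|E|$ --- base case $\ket{+}^{\otimes n}$ for the edgeless graph, inductive step $\ket{G}=C_{uv}^{w}\ket{G'}$ (legitimate because every $C_{lm}$ is diagonal in the computational basis, so the factors of $\mathcal{U}$ commute), together with the matrix identity $\sum_{i,j}\omega^{wij}\ket{i}\ket{j}\propto C_{uv}^{w}(\ket{+}\otimes\ket{+})$ using $H_{ij}=\omega^{ij}$ and $(H^{\dagger})_{ij}=\omega^{2ij}$ --- is the standard argument mirroring the qubit case, and your appeal to Lemma \ref{controlz} plays exactly the role the paper assigns to it. One step should be tightened: you verify the edge gadget only in its \emph{state} form (two one-output spiders joined through $H$), but in the inductive step the spiders at $u$ and $v$ generally carry additional legs, so ``inserting'' the gadget is not literally covered by that computation. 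The clean statement is the \emph{gate} form: after un-fusing with (S1), the diagram for $G$ equals the diagram for $G'$ composed on outputs $u,v$ with the $2\to 2$ subdiagram consisting of a one-input, two-output green spider at each of $u$ and $v$ joined through $H$ (resp.\ $H^{\dagger}$), whose interpretation is $\sum_{i,j}\omega^{ij}\ket{ij}\bra{ij}=C_{uv}$ (resp.\ $C_{uv}^{2}$); soundness of (S1) and functoriality of $\llbracket\cdot\rrbracket$ then give $\llbracket D_{G}\rrbracket\propto C_{uv}^{w}\ket{G'}=\ket{G}$. This is a presentational repair rather than a missing idea, so your proposal stands as a valid proof of the statement the paper leaves to the cited source.
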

A graph state $\ket{G}$ is also denoted by the diagram %
\beginpgfgraphicnamed{Qutrits/grapstg}
\begin{tikzpicture}
	\begin{pgfonlayer}{nodelayer}
		\node [style=none] (0) at (-1, 0) {};
		\node [style=none] (1) at (-0.75, 0) {};
		\node [style=none] (2) at (0, -0.25) {$\cdots$};
		\node [style=none] (3) at (0, 0.25) {$G$};
		\node [style=none] (4) at (0.75, -0.25) {};
		\node [style=none] (5) at (-0.75, -0.25) {};
		\node [style=none] (6) at (1, 0) {};
		\node [style=none] (7) at (0.75, 0) {};
	\end{pgfonlayer}
	\begin{pgfonlayer}{edgelayer}
		\draw [bend left=75, looseness=0.75] (0.center) to (6.center);
		\draw (0.center) to (6.center);
		\draw (1.center) to (5.center);
		\draw (7.center) to (4.center);
	\end{pgfonlayer}
\end{tikzpicture}}
\endpgfgraphicnamed.

\begin{definition}\cite{Miriam2}
A diagram in the stabilizer ZX-calculus is called a GS-LC diagram if it consists of a graph state diagram with arbitrary single-qutrit Clifford operators applied to each output. These associated Clifford operators are called vertex operators.
\end{definition}
An $n$-qutrit GS-LC diagram is represented as
 \begin{equation*}
\beginpgfgraphicnamed{Qutrits/gslc}
\begin{tikzpicture}
	\begin{pgfonlayer}{nodelayer}
		\node [style=none] (0) at (-1, 0) {};
		\node [style=none] (1) at (-0.75, -1) {};
		\node [style=none] (2) at (1, 0) {};
		\node [style=none] (3) at (0, 0.25) {$G$};
		\node [style=none] (4) at (0.75, 0) {};
		\node [style=none] (5) at (-0.75, 0) {};
		\node [style=square box] (6) at (-0.75, -0.5) {$U_1$};
		\node [style=none] (7) at (0, -0.5) {$\cdots$};
		\node [style=none] (8) at (0.75, -1) {};
		\node [style=square box] (9) at (0.75, -0.5) {$U_n$};
	\end{pgfonlayer}
	\begin{pgfonlayer}{edgelayer}
		\draw [bend left=75, looseness=0.75] (0.center) to (2.center);
		\draw (0.center) to (2.center);
		\draw (5.center) to (1.center);
		\draw (4.center) to (8.center);
	\end{pgfonlayer}
\end{tikzpicture}}
\endpgfgraphicnamed
     \end{equation*}
where $G = (V,E)$ is a graph  and $U_v \in\mathcal{C}_1$ for $v\in V$.

\begin{theorem}\label{gandgst1}\cite{GongWang}
Let $G = (V,E)$ be a graph with adjacency matrix $\Gamma$ and  $G\ast_1  v $ be the graph that
results from applying to $G$ a $1$-local complementation about some $v\in V$. Then the corresponding graph states 
$\ket{G} $ and $\ket{G\ast_1  v} $ are related as follows:
\begin{equation}\label{locom1}
\beginpgfgraphicnamed{Qutrits/gandg1}
\begin{tikzpicture}
	\begin{pgfonlayer}{nodelayer}
		\node [style=none] (0) at (-2.25, 0.25) {$G$};
		\node [style=none] (1) at (-3, 0) {};
		\node [style=none] (2) at (-3, -0.5) {};
		\node [style=none] (3) at (-3.25, 0) {};
		\node [style=none] (4) at (-1.25, 0) {};
		\node [style=none] (5) at (-0.5, 0) {$=$};
		\node [style=none] (6) at (-1.5, 0) {};
		\node [style=none] (7) at (-1.5, -0.5) {};
		\node [style=none] (8) at (-2.25, -0.25) {$\cdots$};
		\node [style=none] (9) at (0.5, 0) {};
		\node [style=none] (10) at (0.25, 0) {};
		\node [style=none] (11) at (0.5, -1) {};
		\node [style=none] (12) at (1.75, -1) {};
		\node [style=none] (13) at (3.25, 0) {};
		\node [style=none] (14) at (1.75, 0) {};
		\node [style=none] (15) at (1.25, -0.5) {$\cdots$};
		\node [style=none] (16) at (1.75, 0.25) {$G\ast_1  v$};
		\node [style=gsn] (17) at (0.5, -0.5) {$a_1$\nodepart{lower}$b_1$};
		\node [style=rsn] (18) at (1.75, -0.5) {$1$\nodepart{lower}$1$};
		\node [style=none] (19) at (2.25, -0.5) {$\cdots$};
		\node [style=none] (20) at (3, -1) {};
		\node [style=none] (21) at (3, 0) {};
		\node [style=gsn] (22) at (3, -0.5) {$a_n$\nodepart{lower}$b_n$};
	\end{pgfonlayer}
	\begin{pgfonlayer}{edgelayer}
		\draw [bend left=75, looseness=0.75] (3.center) to (4.center);
		\draw (3.center) to (4.center);
		\draw (1.center) to (2.center);
		\draw (6.center) to (7.center);
		\draw [bend left=75, looseness=0.75] (10.center) to (13.center);
		\draw (10.center) to (13.center);
		\draw (9.center) to (11.center);
		\draw (14.center) to (12.center);
		\draw (21.center) to (20.center);
	\end{pgfonlayer}
\end{tikzpicture}}
\endpgfgraphicnamed
\end{equation}
where for $1\leqslant i \leqslant n, ~i\neq v$,
 \begin{equation*}
\nststile{b_i}{a_i} =  \left\{\begin{array}{l}
    \nststile{2}{2}, ~\mbox{if}~ \Gamma_{iv}\neq 0 \vspace{0.2cm}\\
    \nststile{0}{0}, ~\mbox{if} ~\Gamma_{iv}= 0\\
    \end{array}\right. 
     \end{equation*}
\end{theorem}

\begin{corollary}\label{gandgst2}
Let $G = (V,E)$ be a graph with adjacency matrix $\Gamma$ and  $G\ast_2  v $ be the graph that
results from applying to $G$ a $2$-local complementation about some $v\in V$. Then the corresponding graph states 
$\ket{G} $ and $\ket{G\ast_2  v} $ are related as follows:
\begin{equation}\label{locom2}
\beginpgfgraphicnamed{Qutrits/gandg2}
\begin{tikzpicture}
	\begin{pgfonlayer}{nodelayer}
		\node [style=none] (0) at (-3.25, 0) {};
		\node [style=none] (1) at (3, -1) {};
		\node [style=rsn] (2) at (1.75, -0.5) {$2$\nodepart{lower}$2$};
		\node [style=none] (3) at (1.75, 0) {};
		\node [style=none] (4) at (-3, -0.5) {};
		\node [style=none] (5) at (-1.25, 0) {};
		\node [style=none] (6) at (-2.25, 0.25) {$G$};
		\node [style=none] (7) at (-0.5, 0) {$=$};
		\node [style=none] (8) at (2.25, -0.5) {$\cdots$};
		\node [style=gsn] (9) at (0.5, -0.5) {$a_1$\nodepart{lower}$b_1$};
		\node [style=none] (10) at (0.5, 0) {};
		\node [style=none] (11) at (-1.5, 0) {};
		\node [style=none] (12) at (-3, 0) {};
		\node [style=none] (13) at (1.75, 0.25) {$G\ast_2  v$};
		\node [style=none] (14) at (0.5, -1) {};
		\node [style=none] (15) at (3, 0) {};
		\node [style=none] (16) at (1.25, -0.5) {$\cdots$};
		\node [style=none] (17) at (-2.25, -0.25) {$\cdots$};
		\node [style=none] (18) at (3.25, 0) {};
		\node [style=none] (19) at (0.25, 0) {};
		\node [style=none] (20) at (1.75, -1) {};
		\node [style=none] (21) at (-1.5, -0.5) {};
		\node [style=gsn] (22) at (3, -0.5) {$a_n$\nodepart{lower}$b_n$};
	\end{pgfonlayer}
	\begin{pgfonlayer}{edgelayer}
		\draw [bend left=75, looseness=0.75] (0.center) to (5.center);
		\draw (0.center) to (5.center);
		\draw (12.center) to (4.center);
		\draw (11.center) to (21.center);
		\draw [bend left=75, looseness=0.75] (19.center) to (18.center);
		\draw (19.center) to (18.center);
		\draw (10.center) to (14.center);
		\draw (3.center) to (20.center);
		\draw (15.center) to (1.center);
	\end{pgfonlayer}
\end{tikzpicture}}
\endpgfgraphicnamed
\end{equation}
where for $1\leqslant i \leqslant n, ~i\neq v$,
 \begin{equation*}
\nststile{b_i}{a_i} =  \left\{\begin{array}{l}
    \nststile{1}{1}, ~\mbox{if}~ \Gamma_{iv}\neq 0 \vspace{0.2cm}\\
    \nststile{0}{0}, ~\mbox{if} ~\Gamma_{iv}= 0\\
    \end{array}\right. 
     \end{equation*}
\end{corollary}
\begin{proof}
Note that $G\ast_2  v=(G\ast_1  v)\ast_1  v$.
\end{proof}

\begin{theorem}\label{gandgst3}
Let $G = (V,E)$ be a graph with adjacency matrix $\Gamma$ and  $G\circ_2  v $ be the graph that
results from applying  to $G$ the transformation $\circ_2  v $ about some $v\in V$. Then the corresponding graph states 
$\ket{G} $ and $\ket{G\circ_2  v} $ are related as follows:
\begin{equation}\label{locom3}
\beginpgfgraphicnamed{Qutrits/gandg3}
\begin{tikzpicture}
	\begin{pgfonlayer}{nodelayer}
		\node [style=none] (0) at (2.25, -0.5) {$\cdots$};
		\node [style=none] (1) at (-3.25, 0) {};
		\node [style=none] (2) at (3.25, 0) {};
		\node [style=none] (3) at (-1.5, 0) {};
		\node [style=none] (4) at (-1.5, -0.5) {};
		\node [style=none] (5) at (-1.25, 0) {};
		\node [style=none] (6) at (1.75, -1) {};
		\node [style=none] (7) at (-3, 0) {};
		\node [style=none] (8) at (3, 0) {};
		\node [style=none] (9) at (3, -1) {};
		\node [style=none] (10) at (1.75, 0.25) {$G\circ_2  v$};
		\node [style=none] (11) at (-0.5, 0) {$=$};
		\node [style=none] (12) at (-2.25, -0.25) {$\cdots$};
		\node [style=none] (13) at (1.25, -0.5) {$\cdots$};
		\node [style=none] (14) at (0.5, -1) {};
		\node [style=none] (15) at (-2.25, 0.25) {$G$};
		\node [style=none] (16) at (1.75, 0) {};
		\node [style=none] (17) at (0.25, 0) {};
		\node [style=none] (18) at (-3, -0.5) {};
		\node [style=none] (19) at (0.5, 0) {};
		\node [style=H box] (20) at (1.75, -0.25) {$H$};
		\node [style=H box] (21) at (1.75, -0.75) {$H$};
	\end{pgfonlayer}
	\begin{pgfonlayer}{edgelayer}
		\draw [bend left=75, looseness=0.75] (1.center) to (5.center);
		\draw (1.center) to (5.center);
		\draw (7.center) to (18.center);
		\draw (3.center) to (4.center);
		\draw [bend left=75, looseness=0.75] (17.center) to (2.center);
		\draw (17.center) to (2.center);
		\draw (19.center) to (14.center);
		\draw (16.center) to (6.center);
		\draw (8.center) to (9.center);
	\end{pgfonlayer}
\end{tikzpicture}}
\endpgfgraphicnamed
\end{equation}
where the Hadamard nodes are on the output of the vertex $v$.

\end{theorem}


With these equivalences of transformations of GS-LC diagrams, it can be shown that each qutrit stabilizer state diagram is equal to some GS-LC diagram.

\begin{theorem}\label{stabtogslc1}
In the qutrit ZX-calculus, each qutrit stabilizer state diagram can be rewritten into some GS-LC diagram.\end{theorem}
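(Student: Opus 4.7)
My proof plan proceeds by induction on the construction of the stabilizer state diagram $D$, exploiting the fact that every qutrit stabilizer state on $n$ qutrits arises as $U\ket{+}^{\otimes n}$ for some $U\in\mathcal{C}_n$. So up to ZX-rewriting it suffices to start from the GS-LC diagram for $\ket{+}^{\otimes n}$ (the empty graph on $n$ vertices with identity vertex operators) and show that applying each Clifford generator preserves the GS-LC property modulo rewriting. Since $\mathcal{C}_n$ is generated by the single-qutrit gates $\mathcal{S}$, $H$, $H^\dagger$ and the two-qutrit SUM gate $\Lambda$, there are essentially two cases to handle.

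For a single-qutrit Clifford applied to the output wire of vertex $v$ in a GS-LC diagram, the new gate is simply concatenated with the existing vertex operator $U_v$; since $\mathcal{C}_1$ is closed under composition the result remains GS-LC. The nontrivial case is $\Lambda$, which differs from the Hadamard-edge gate by local Cliffords and so (using the previous case) reduces to the problem of adding a weight-$1$ or weight-$2$ Hadamard edge between two vertices $u,v$ whose current vertex operators are $U_u,U_v\in\mathcal{C}_1$. To handle this I would first invoke Theorem~\ref{uniformsin} to rewrite each of $U_u$ and $U_v$ in one of the three normal forms. When the bottom segment of $U_v$ (the part adjacent to the graph state's green node) is itself a green phase, spider fusion immediately absorbs it into the graph state's node and the desired edge can be added directly using Lemma~\ref{controlz}. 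When instead the bottom segment is an $H$ or $H^\dagger$, I would apply Theorem~\ref{gandgst1}, Corollary~\ref{gandgst2} or Theorem~\ref{gandgst3} to push that Hadamard into the graph as a local complementation or a $\circ_2$-transformation; this modifies the adjacency matrix and produces extra single-qutrit Cliffords at each neighbour of $v$, which get absorbed into those neighbours' vertex operators by the first case. Iterating this reduction at $u$ and $v$ eventually yields vertex operators whose bottoms are green, and the Hadamard edge is then inserted by spider fusion.

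The main obstacle is the combinatorial case analysis. Because $\mathcal{C}_1$ has $216$ elements and the normal form of Theorem~\ref{uniformsin} comprises three distinct shapes with parameters in $\mathcal{A}$, $\mathcal{P}$, $\mathcal{Q}$, and $\mathcal{M}$, one has to verify that every shape can be steered into a green-tailed configuration by a finite controlled sequence of the three available local-complementation rewrites, and that the induced side-effects on neighbouring vertex operators are correctly tracked and themselves lie in $\mathcal{C}_1$. Secondary subtleties arise from the qutrit-specific doubling of the colour-change rule: the weight-$1$ and weight-$2$ edges (corresponding to $H$ and $H^\dagger$) must be treated separately, which is the origin of the richer $\ast_1 / \ast_2 / \circ_2$ trichotomy and the main source of technical complication compared with the qubit proof in \cite{Miriam1}. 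Once the case analysis is complete, the induction closes and the statement follows.
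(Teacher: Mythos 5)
There is a genuine gap, and it lies in the very first step of your induction. The theorem quantifies over \emph{stabilizer state diagrams}: arbitrary ZX-diagrams with no inputs whose phase angles are multiples of $\frac{2}{3}\pi$. Such a diagram need not have the shape of a Clifford circuit applied to $\ket{+}^{\otimes n}$ --- it may be a single green spider with five outputs, or contain caps, multi-legged spiders acting as merges ($2\to 1$) or post-selections ($1\to 0$), etc. Your plan replaces induction on the structure of the given diagram $D$ by induction on a circuit decomposition of the \emph{state} $\llbracket D\rrbracket = U\ket{+}^{\otimes n}$, $U\in\mathcal{C}_n$. But that decomposition is a fact about Hilbert-space semantics; to use it you would need to first rewrite $D$, \emph{inside the ZX-calculus}, into the circuit form $U\ket{+}^{\otimes n}$, and no rule gives you that. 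Indeed, the ability to pass diagrammatically from $D$ to any other diagram with the same interpretation is exactly the completeness property the paper is in the process of establishing, so invoking it here is circular. As it stands, your argument proves only the weaker statement ``for every stabilizer \emph{state} there exists \emph{some} diagram of it that rewrites to GS-LC form,'' not that \emph{every} stabilizer diagram does.

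The proof the paper intends (mirroring the qubit argument of \cite{Miriam1, Miriam2}) is a structural induction over the components of the given diagram itself: decompose $D$ into elementary pieces (spiders of small arity, $H$, $H^\dagger$, cups, caps, swaps) and show that the property ``equal in the calculus to some GS-LC diagram'' is closed under composing with each such piece. The unitary cases you treat (phases, Hadamards, and edge-addition via Lemma~\ref{controlz}, Theorems~\ref{gandgst1}, \ref{gandgst3} and Corollary~\ref{gandgst2}) do occur there, and your sketch of clearing vertex operators by local complementations is the right ingredient for the edge case; but the genuinely unavoidable cases your plan omits are the \emph{non-unitary} generators: the green $1\to 2$ spider (copying a vertex, which becomes a new vertex joined by an $H$-edge with an $H$ vertex operator), the $1\to 0$ and $2\to 1$ spiders and the cap (post-selected measurements and merges, handled via vertex deletion and local complementation on the underlying graph). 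Without closure under these operations the induction cannot reach an arbitrary stabilizer diagram, so the proposal does not prove the stated theorem.
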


\subsection{Reduced GS-LC diagrams}
Further to GS-LC diagrams, we can define a more reduced form as in \cite{Miriam2}.

\begin{definition}\label{rgslcdef}

A stabilizer state diagram is called to be in a reduced GS-LC (or rGS-LC) form if it is a GS-LC diagram satisfying the following conditions:
\begin{itemize}
\item All vertex operators belong to the set
\begin{equation}\label{rgslc1}
  \mathcal{R}=\left\{\quad %
\beginpgfgraphicnamed{Qutrits/singlephases}
\begin{tikzpicture}
	\begin{pgfonlayer}{nodelayer}
		\node [style=gsn] (0) at (-1.5, 0.25) {$0$\nodepart{lower}$1$};
		\node [style=gsn] (1) at (-3, 0.25) {$2$\nodepart{lower}$1$};
		\node [style=none] (2) at (0, -0.5) {};
		\node [style=gsn] (3) at (-2, 0.25) {$2$\nodepart{lower}$2$};
		\node [style=none] (4) at (-4, -0.5) {};
		\node [style=gsn] (5) at (-3.5, 0.25) {$1$\nodepart{lower}$2$};
		\node [style=gsn] (6) at (-2.5, 0.25) {$1$\nodepart{lower}$1$};
		\node [style=gsn] (7) at (-1, 0.25) {$1$\nodepart{lower}$0$};
		\node [style=none] (8) at (-0.5, -0.5) {};
		\node [style=gsn] (9) at (-0.5, 0.25) {$0$\nodepart{lower}$2$};
		\node [style=none] (10) at (-1.5, -0.5) {};
		\node [style=none] (11) at (-3.5, -0.5) {};
		\node [style=none] (12) at (-2.5, -0.5) {};
		\node [style=none] (13) at (-3, -0.5) {};
		\node [style=gsn] (14) at (0, 0.25) {$2$\nodepart{lower}$0$};
		\node [style=none] (15) at (0.5, -0.5) {};
		\node [style=none] (16) at (-1, -0.5) {};
		\node [style=none] (17) at (-2, -0.5) {};
		\node [style=none] (18) at (-3.5, 1) {};
		\node [style=none] (19) at (-2.5, 1) {};
		\node [style=none] (20) at (-1.5, 1) {};
		\node [style=none] (21) at (0, 1) {};
		\node [style=none] (22) at (0.5, 1) {};
		\node [style=none] (23) at (-4, 1) {};
		\node [style=none] (24) at (-3, 1) {};
		\node [style=none] (25) at (-0.5, 1) {};
		\node [style=none] (26) at (-2, 1) {};
		\node [style=none] (27) at (-1, 1) {};
		\node [style=rsn] (28) at (0.5, 0) {$2$\nodepart{lower}$2$};
		\node [style=gsn] (29) at (0.5, 0.5) {$2$\nodepart{lower}$2$};
		\node [style=none] (30) at (1, -0.5) {};
		\node [style=rsn] (31) at (1, 0) {$2$\nodepart{lower}$2$};
		\node [style=none] (32) at (1, 1) {};
		\node [style=gsn] (33) at (1, 0.5) {$0$\nodepart{lower}$1$};
		\node [style=none] (34) at (1.5, -0.5) {};
		\node [style=rsn] (35) at (1.5, 0) {$2$\nodepart{lower}$2$};
		\node [style=none] (36) at (1.5, 1) {};
		\node [style=gsn] (37) at (1.5, 0.5) {$1$\nodepart{lower}$0$};
		\node [style=none] (38) at (1, 0.5) {};
		\node [style=rsn] (39) at (2, 0) {$1$\nodepart{lower}$1$};
		\node [style=gsn] (40) at (2, 0.5) {$1$\nodepart{lower}$1$};
		\node [style=none] (41) at (2, -0.5) {};
		\node [style=none] (42) at (2, 1) {};
		\node [style=rsn] (43) at (2.5, 0) {$1$\nodepart{lower}$1$};
		\node [style=none] (44) at (2.5, -0.5) {};
		\node [style=gsn] (45) at (2.5, 0.5) {$0$\nodepart{lower}$2$};
		\node [style=none] (46) at (2.5, 1) {};
		\node [style=rsn] (47) at (3, 0) {$1$\nodepart{lower}$1$};
		\node [style=none] (48) at (3, -0.5) {};
		\node [style=gsn] (49) at (3, 0.5) {$2$\nodepart{lower}$0$};
		\node [style=none] (50) at (3, 1) {};
	\end{pgfonlayer}
	\begin{pgfonlayer}{edgelayer}
		\draw (11.center) to (5);
		\draw (13.center) to (1);
		\draw (12.center) to (6);
		\draw (17.center) to (3);
		\draw (10.center) to (0);
		\draw (16.center) to (7);
		\draw (8.center) to (9);
		\draw (2.center) to (14);
		\draw (18.center) to (5);
		\draw (24.center) to (1);
		\draw (19.center) to (6);
		\draw (26.center) to (3);
		\draw (20.center) to (0);
		\draw (27.center) to (7);
		\draw (25.center) to (9);
		\draw (21.center) to (14);
		\draw (23.center) to (4.center);
		\draw (22.center) to (15.center);
		\draw (32.center) to (30.center);
		\draw (36.center) to (34.center);
		\draw (42.center) to (41.center);
		\draw (46.center) to (44.center);
		\draw (50.center) to (48.center);
	\end{pgfonlayer}
\end{tikzpicture}}
\endpgfgraphicnamed\quad\right\}.
  \end{equation}

\item Two adjacent vertices must not both have vertex operators including red nodes.
\end{itemize}
\end{definition}

\begin{theorem}\label{gslctorgslc}
In the qutrit ZX-calculus, each qutrit stabilizer state diagram can be rewritten into some rGS-LC diagram.
  \end{theorem}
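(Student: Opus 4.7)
The plan is to begin with the result of Theorem~\ref{stabtogslc1}, which already rewrites any stabilizer state diagram as a GS-LC diagram, and then refine that GS-LC diagram into rGS-LC form through two successive reductions: first, ensure each vertex operator lies in the distinguished set $\mathcal{R}$ of Definition~\ref{rgslcdef}; second, enforce the adjacency constraint that no two neighbours both carry vertex operators containing red nodes.

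For the first reduction, I would invoke the single-qutrit normal form of Theorem~\ref{uniformsin} to rewrite each vertex operator as one of three canonical shapes, each being a composition of green phase nodes, red phase nodes and Hadamard nodes. The outermost green phase sits directly on top of the green graph-state vertex, so it can be absorbed into that vertex by spider fusion (S1). What remains at each site then has a restricted form, which---after applying the commutation rules (K1), (K2), the Euler-type rule (EU), and (H1), (H2), ($H2'$)---can be reduced further until it matches an element of the finite set $\mathcal{R}$.

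For the second reduction, I would iterate: while some pair of adjacent vertices $u,v$ both carry vertex operators containing red nodes, select such a pair and apply one of the graph transformations from Theorem~\ref{gandgst1}, Corollary~\ref{gandgst2}, or Theorem~\ref{gandgst3}, which replaces $G$ by $G\ast_1 v$, $G\ast_2 v$, or $G\circ_2 v$ respectively. The transformation is chosen so that the Clifford correction introduced at $v$ fuses with the existing vertex operator there to cancel its red component, leaving a vertex operator without red nodes. The corrections $\nststile{b_i}{a_i}$ appearing at neighbours $i$ of $v$ must then be reabsorbed into their existing vertex operators and the first reduction re-applied locally so that every vertex operator is again an element of $\mathcal{R}$.

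The principal obstacle I expect is proving termination of the second reduction, because a local complementation at $v$ simultaneously modifies the vertex operator at every neighbour of $v$, and may in principle produce new adjacent pairs that violate the condition. A well-founded measure will be needed---for example, a lexicographic order based on a fixed enumeration of $V$ that strictly decreases once a ``leftmost'' offending vertex is cleaned up, or a potential function counting red-containing vertex operators weighted by a suitable notion of depth. This is appreciably more delicate than the qubit analogue treated in \cite{Miriam2}, because $\mathcal{C}_1$ has order $216$ rather than $24$, so the case analysis determining which of $\ast_1$, $\ast_2$, or $\circ_2$ to apply, and how the induced neighbour corrections renormalize back into $\mathcal{R}$, involves substantially more subcases.
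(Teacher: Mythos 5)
Your two-stage plan (first force every vertex operator into $\mathcal{R}$, then clear adjacent red--red pairs) is the right skeleton --- it is the strategy of the qubit proof in \cite{Miriam1,Miriam2} that this paper transplants --- but the mechanism you give for the first stage does not work, and the flaw is structural rather than cosmetic. Absorbing the graph-side green phase of a vertex operator into the graph vertex via (S1) produces a \emph{phased} green vertex, so the underlying diagram is no longer a graph state diagram and hence not a GS-LC diagram at all; Definition~\ref{rgslcdef} presupposes phase-free graph vertices. Worse, every other move you allow in this stage ((K1), (K2), (EU), (H1), (H2), (H2')) is applied inside the vertex-operator region only, and soundness forces such rewrites to preserve the linear map that region denotes, up to a scalar. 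Since $\mathcal{R}$ is a small finite subset of the $216$ single-qutrit Cliffords, most vertex operators --- $H$ itself is a concrete example, as one can check that no composite of one green and one red phase gate is proportional to $H$ --- can \emph{never} be brought into $\mathcal{R}$ by site-local rewriting on a non-isolated vertex. Reduction into $\mathcal{R}$ genuinely requires changing the underlying graph: local complementations (Theorem~\ref{gandgst1}, Corollary~\ref{gandgst2}), the $\circ_2$ move (Theorem~\ref{gandgst3}), and the copying of Pauli-type red nodes (phases in $\mathcal{M}$) through the graph onto the neighbours. In other words, the machinery you postpone to the second stage is already indispensable in the first. What makes the vertex-by-vertex process consistent is a closure property your proposal never identifies: all of these moves affect the \emph{neighbours} only by composing their operators with green phases on the graph side, and $\mathcal{R}$ is closed under exactly this composition, so vertices already dealt with stay in $\mathcal{R}$.

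The same closure property settles the termination question that you correctly flag but leave open, and it does so with a far simpler measure than the ones you float. When an offending adjacent pair is repaired (by suitably chosen local complementations; the computations are of the same kind as in Lemma~\ref{equivrgslc}), one of the two vertex operators is made red-free, while every other vertex operator in the diagram is merely composed with graph-side green phases and therefore never \emph{gains} a red node. Hence the number of vertices whose vertex operator contains a red node strictly decreases at every repair; even though local complementation rewires edges and can make two of the remaining red vertices newly adjacent, that count bounds the number of iterations, and no lexicographic or depth-weighted potential is needed. Finally, your one-sentence description of the repair step (``the correction introduced at $v$ fuses with the existing vertex operator there to cancel its red component'') hides the real content: the corrections produced by $\ast_1/\ast_2$ arrive on the \emph{graph} side of $U_v$, whereas the red node of an element of $\mathcal{R}$ sits on the output side with a green phase in between, and $\circ_2$ deposits Hadamard nodes rather than red phases; cancelling therefore requires a case analysis in the (K1)/(K2)/(EU) phase algebra that depends on the intervening green phase. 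That case analysis, together with the monotone count above, is the proof; both are missing from your proposal.
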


\subsection{Transformations of rGS-LC diagrams}

 In this subsection we show how to  transform one rGS-LC diagrams into another rGS-LC diagrams. Note that we call the 
graphical transformation $\circ_2  v $ a doubling-neighbour-edge transformation.
\begin{lemma}\label{equivrgslc}
Suppose there is a rGS-LC diagram which has a pair of neighbouring qutrits $a$ and $b$  as follows:\begin{equation}\label{equivrgslcs1}
\beginpgfgraphicnamed{Qutrits/equivrgslct1}
\begin{tikzpicture}
	\begin{pgfonlayer}{nodelayer}
		\node [style=none] (0) at (-1, 1.25) {};
		\node [style=gsn] (1) at (-0.75, -0.25) {$a_1$\nodepart{lower}$a_2$};
		\node [style=none] (2) at (-1.75, 1) {$\cdots$};
		\node [style=none] (3) at (-2, 1.25) {};
		\node [style=gn] (4) at (-0.75, 0.5) {};
		\node [style=gsn] (5) at (-1.75, 0) {$q_1$\nodepart{lower}$q_2$};
		\node [style=none] (6) at (-0.75, -1.25) {};
		\node [style=none] (7) at (-1.75, -1.25) {};
		\node [style=none] (8) at (-0.75, 1) {$\cdots$};
		\node [style=none] (9) at (-0.5, 1.25) {};
		\node [style=none] (10) at (-1.5, 1.25) {};
		\node [style=none] (11) at (-0.5, 0.5) {$b$};
		\node [style=H box] (12) at (-1.25, 0.5) {$h$};
		\node [style=none] (13) at (-2, 0.5) {$a$};
		\node [style=gn] (14) at (-1.75, 0.5) {};
		\node [style=rsn] (15) at (-1.75, -0.75) {$p_1$\nodepart{lower}$p_1$};
		\node [style=none] (16) at (0, 0) {$=$};
		\node [style=H box] (17) at (1.25, 0.5) {$h$};
		\node [style=none] (18) at (1.75, 1) {$\cdots$};
		\node [style=gsn] (19) at (1.75, -0.25) {$a_1$\nodepart{lower}$a_2$};
		\node [style=none] (20) at (2, 0.5) {$b$};
		\node [style=none] (21) at (0.75, 1) {$\cdots$};
		\node [style=rsn] (22) at (0.75, -1.5) {$p_1$\nodepart{lower}$p_1$};
		\node [style=none] (23) at (0.5, 1.25) {};
		\node [style=gn] (24) at (1.75, 0.5) {};
		\node [style=none] (25) at (1.75, -1.25) {};
		\node [style=none] (26) at (2, 1.25) {};
		\node [style=none] (27) at (0.75, -2) {};
		\node [style=gn] (28) at (0.75, 0.5) {};
		\node [style=none] (29) at (1, 1.25) {};
		\node [style=none] (30) at (0.5, 0.5) {$a$};
		\node [style=none] (31) at (1.5, 1.25) {};
		\node [style=gsn] (32) at (0.75, 0) {$m_1$\nodepart{lower}$m_2$};
		\node [style=gsn] (33) at (0.75, -0.75) {$p_1$\nodepart{lower}$p_1$};
	\end{pgfonlayer}
	\begin{pgfonlayer}{edgelayer}
		\draw [bend left=15, looseness=1.00] (10.center) to (14);
		\draw [bend right=15, looseness=1.00] (3.center) to (14);
		\draw [bend right=15, looseness=1.00] (0.center) to (4);
		\draw [bend left=15, looseness=1.00] (9.center) to (4);
		\draw (14) to (4);
		\draw (14) to (7.center);
		\draw (4) to (6.center);
		\draw [bend left=15, looseness=1.00] (29.center) to (28);
		\draw [bend right=15, looseness=1.00] (23.center) to (28);
		\draw [bend right=15, looseness=1.00] (31.center) to (24);
		\draw [bend left=15, looseness=1.00] (26.center) to (24);
		\draw (28) to (24);
		\draw (28) to (27.center);
		\draw (24) to (25.center);
	\end{pgfonlayer}
\end{tikzpicture}}
\endpgfgraphicnamed
\end{equation}
where  $\nststile{a_2}{a_1}\in \mathcal{A}, \nststile{p_1}{p_1}\in \mathcal{P}, \nststile{m_2}{m_1}\in \mathcal{M}, \nststile{q_2}{q_1} = \nststile{p_1}{p_1} +\nststile{m_2}{m_1}$, $h$ stands for either an $H$ node or an $H^{\dagger}$ node.   Then a  rGS-LC diagram with the  following pair can be obtained by  performing firstly a $p_1$-local complementation about $b$, followed by a
$(-p_1)$-local complementation about $a$, and possibly  a further  
$(-p_1)$-local complementation about $b$, in addition with some doubling-neighbour-edge operations $\circ_2  a(b) $ and copying operations on red nodes with phase angles in $ \mathcal{M}$:
\begin{equation}\label{equivrgslcs2}
\beginpgfgraphicnamed{Qutrits/equivrgslct2}
\begin{tikzpicture}
	\begin{pgfonlayer}{nodelayer}
		\node [style=gn] (0) at (-0.5, 0.5) {};
		\node [style=none] (1) at (-0.75, 1.25) {};
		\node [style=none] (2) at (0.5, -2) {};
		\node [style=gn] (3) at (0.5, 0.5) {};
		\node [style=rsn] (4) at (0.5, -1.5) {$p^{\prime}_1$\nodepart{lower}$p^{\prime}_1$};
		\node [style=none] (5) at (-0.75, 0.5) {$a$};
		\node [style=none] (6) at (-0.5, -1.25) {};
		\node [style=none] (7) at (0.5, 1) {$\cdots$};
		\node [style=gsn] (8) at (-0.5, -0.25) {$a^{\prime}_1$\nodepart{lower}$a^{\prime}_2$};
		\node [style=gsn] (9) at (0.5, -0.75) {$p^{\prime}_1$\nodepart{lower}$p^{\prime}_1$};
		\node [style=none] (10) at (-0.5, 1) {$\cdots$};
		\node [style=none] (11) at (0.75, 1.25) {};
		\node [style=gsn] (12) at (0.5, 0) {$m^{\prime}_1$\nodepart{lower}$m^{\prime}_2$};
		\node [style=none] (13) at (-0.25, 1.25) {};
		\node [style=none] (14) at (0.25, 1.25) {};
		\node [style=H box] (15) at (0, 0.5) {$h$};
		\node [style=none] (16) at (0.75, 0.5) {$b$};
	\end{pgfonlayer}
	\begin{pgfonlayer}{edgelayer}
		\draw [bend right=15, looseness=1.00] (14.center) to (3);
		\draw [bend left=15, looseness=1.00] (11.center) to (3);
		\draw [bend left=15, looseness=1.00] (13.center) to (0);
		\draw [bend right=15, looseness=1.00] (1.center) to (0);
		\draw (3) to (0);
		\draw (3) to (2.center);
		\draw (0) to (6.center);
	\end{pgfonlayer}
\end{tikzpicture}}
\endpgfgraphicnamed
\end{equation}
where $\nststile{a^{\prime}_2}{a^{\prime}_1}\in \mathcal{A}, \nststile{p^{\prime}_1}{p^{\prime}_1}\in \mathcal{P}, \nststile{m^{\prime}_2}{m^{\prime}_1}\in \mathcal{M}$.

\end{lemma}

\section{Completeness}

\subsection{Comparing rGS-LC diagrams}
In this subsection, we show that a pair of rGS-LC diagrams can be transformed into such a form that they are equal under the standard interpretation if and only if they  are identical. 

 \begin{definition}\cite{Miriam2}
A pair of rGS-LC diagrams on the same number of qutrit is called simplified if there are no pairs of qutrits $a, b$,  such that $a$ has a red node in its vertex operator in the first diagram but not in the second, $b$ has a red node in the second diagram but not in the first, and $a$ and $b$ are adjacent in at least one of the diagrams.
 \end{definition}

\begin{lemma}\label{rgspsym}
Each pair of rGS-LC diagrams on the same number of qutrits can be made into a simplified form.
\end{lemma}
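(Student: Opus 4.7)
The plan is to proceed by induction on the number of ``defects'' in the pair $(D_1, D_2)$, where I call an unordered pair of distinct qutrits $(a,b)$ a defect if $a$ has a red-node vertex operator in $D_1$ but not in $D_2$, $b$ has a red-node vertex operator in $D_2$ but not in $D_1$, and $a$ and $b$ are adjacent in at least one of $D_1, D_2$. A pair is simplified precisely when its defect count $N(D_1, D_2)$ is zero, so it suffices to exhibit, whenever $N > 0$, a rewrite that keeps the pair in rGS-LC form (preserving the standard interpretation) and strictly decreases $N$.

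For the inductive step, suppose $N > 0$ and pick a defect $(a,b)$; without loss of generality take $a$ and $b$ to be adjacent in $D_1$. By Definition \ref{rgslcdef} and \eqref{rgslc1}, the vertex operator at $a$ in $D_1$ contains a red node with phase in $\mathcal{M}$, while by reducedness the vertex operator at $b$ in $D_1$ contains no red node. I would then apply Lemma \ref{equivrgslc} to $D_1$ at the pair $(a,b)$, choosing the parameter $\nststile{p_1}{p_1} \in \mathcal{P}$ (and, if needed, the follow-up $(-p_1)$-local complementation at $b$) so that $\nststile{q_2}{q_1} = \nststile{p_1}{p_1} + \nststile{m_2}{m_1}$ matches the red phase already present at $a$. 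The effect of this sequence of local complementations is to transfer the red node at $a$ over to $b$ in $D_1$: after the rewrite $a$ carries no red operator in either diagram, while $b$ carries a red operator in both, so the defect $(a,b)$ is eliminated.

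The main obstacle, and where the qutrit proof is harder than its qubit predecessor in \cite{Miriam1}, is controlling the side effects of Lemma \ref{equivrgslc}. Two kinds of side effects arise: doubling-neighbour-edge operations $\circ_2$ that modify $H$ and $H^{\dagger}$ edges among the common neighbours of $a$ and $b$, and red phases from $\mathcal{M}$ appended to the vertex operators of those neighbours by copying through the $\mathcal{S}$-gates produced by the local complementations. I must check that these side effects do not introduce more new defects than the one removed. The inputs needed are (i) the reducedness condition (no two adjacent vertices both carry red operators), which sharply constrains the pre-existing red pattern around $a$ and $b$, and (ii) the closed list \eqref{rgslc1} of admissible vertex operators, so that any newly appended phase in $\mathcal{M}$ can be renormalised back into $\mathcal{R}$.

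The bookkeeping I expect to carry out case by case is: appending $\nststile{0}{0}$ is trivial; appending $\nststile{1}{2}$ or $\nststile{2}{1}$ to a neighbour $c$ either absorbs into an existing red vertex operator at $c$ (in which case $c$ already had red in $D_1$ and no new defect is created) or upgrades $c$ to a red-bearing vertex operator in $D_1$, in which case the reducedness of $D_1$ forces all of $c$'s neighbours in $D_1$ to have been red-free, constraining which potential defect pairs $(c,d)$ can arise. The edge toggling by $\circ_2$ only changes adjacency within the set $\{a, b\} \cup \mathrm{Nbhd}(a) \cap \mathrm{Nbhd}(b)$, and the corresponding new or destroyed defects can be counted against the removed defect $(a,b)$. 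Combining this analysis with a lexicographic potential that pairs $N$ with, say, the number of red vertex operators in $D_1$, I would show that the potential strictly decreases under each rewrite, so the procedure terminates at a simplified pair. The hard part is exactly this enumeration of how the appended $\mathcal{M}$-phases and toggled $H$/$H^{\dagger}$-edges interact with the reducedness constraint, and I would do it by dividing into subcases on the colour pattern of the vertex operator at each common neighbour in each of the two diagrams.
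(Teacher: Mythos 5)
Your overall strategy is the right one, and it is the same one the paper intends (the paper's own proof is a one-line deferral to the qubit argument of \cite{Miriam1, Miriam2}): pick a problematic pair $(a,b)$, assume WLOG they are adjacent in the first diagram, note that reducedness forces $b$ to be red-free there, and apply Lemma \ref{equivrgslc} to move the red node from $a$ to $b$, after which $a$ is red-free in both diagrams and $b$ is red-bearing in both. The genuine gap is in your termination argument. Your induction measure $N$ (the number of defects) can \emph{increase} under the move: the local complementations underlying Lemma \ref{equivrgslc} change adjacency, so a pair $(c,d)$ with complementary red pattern that was non-adjacent in both diagrams can become adjacent in $D_1$ and turn into a new defect. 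You acknowledge this, but the lexicographic patch you propose, pairing $N$ with the number of red vertex operators in $D_1$, provably does not work: the move leaves that second quantity exactly unchanged ($a$ loses a red node, $b$ gains one), so the potential's second coordinate never breaks ties while the first coordinate may grow. The ``bookkeeping I expect to carry out'' is therefore not a routine verification but precisely the unproven core of your proof, and in the form you set it up it cannot be completed.

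The missing idea --- the one the cited qubit proof actually uses and which transfers verbatim to the qutrit case --- is a different measure: the number of qutrits whose vertex operator contains a red node in \emph{exactly one} of the two diagrams. Each application of Lemma \ref{equivrgslc} changes red-status only at $a$ (red in $D_1$ only $\to$ red in neither) and at $b$ (red in $D_2$ only $\to$ red in both), so this count drops by exactly $2$ per step, and it is completely insensitive to edge toggles and to green phases deposited on neighbours; termination is then immediate, with no case analysis at all. This also dissolves your main worry: a red node with phase in $\mathcal{M}$ is a Pauli-$X$-type gate (its interpretation is a permutation matrix), and the ``copying operations on red nodes with phases in $\mathcal{M}$'' in Lemma \ref{equivrgslc} refer precisely to pushing such gates through the green graph-state nodes (rule (K1)), where they dissolve into green phases on other outputs. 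They never persist as red vertex operators --- the red nodes admissible in $\mathcal{R}$ carry phases in $\mathcal{Q}$, not $\mathcal{M}$ --- so no neighbour is ever ``upgraded'' to red-bearing, and the subcase enumeration you propose is unnecessary.
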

\begin{proof}
The proof is same as that of the qubit case presented in   \cite{Miriam1, Miriam2}.
 \end{proof}

\begin{lemma}\label{unpaired}
Any component  diagrams  of a simplified pair of rGS-LC with  an unpaired red node are unequal, where this red node  resides as a vertex operator only  in one of the pair of diagrams. 
\end{lemma}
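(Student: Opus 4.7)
The plan is to exhibit basis-state effects which, once post-composed with each diagram of the pair, yield unequal complex scalars. Let $a$ denote a qutrit whose vertex operator in $D_1$ contains a red node but whose vertex operator in $D_2$ does not (without loss of generality). The simplified-pair hypothesis ensures that no neighbour of $a$, in either of the two graphs, carries a red vertex operator that is absent in $D_1$, so no cross-contamination of red nodes occurs around $a$; every edge adjacent to $a$ meets, on the far side, a vertex operator with no red component in either diagram.

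First I would localise the situation to a neighbourhood of $a$. By post-composing both diagrams with Z-basis effects (green counits carrying appropriate phases in $\mathcal{A}$) at every qutrit outside the closed neighbourhood $\{a\}\cup N(a)$, each outside green node reduces via spider fusion to a non-zero scalar; the $H$- or $H^\dagger$-mediated edges incident on those nodes contribute phases that propagate inward, and by the simplified property these propagated contributions are identical in $D_1$ and $D_2$. What remains is a localised rGS-LC diagram on $\{a\}\cup N(a)$ for each of $D_1$ and $D_2$, with identical graph structure and matching vertex operators at the neighbours of $a$, differing only by the red node at vertex $a$ itself.

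Next I would sever the edges incident on $a$ by applying a Z-basis effect at each neighbour. Using Lemma \ref{controlz} together with spider fusion, each $H$- or $H^\dagger$-mediated edge collapses to a scalar multiplier, leaving a single-qutrit diagram at $a$ of the form $\bra{j}U_a\ket{k}$ for freely chosen $j,k\in\{0,1,2\}$, where $U_a$ is the vertex operator at $a$ (containing the red node in $D_1$, not in $D_2$). The common scalar multiplier produced by the preceding reduction is the same for both diagrams and is non-zero, so distinguishability of the overall diagrams reduces to distinguishability of the single-qutrit amplitudes.

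The main obstacle is the final finite enumeration. One must verify that for every red-containing vertex operator allowed by $\mathcal{R}$ there is a choice of $(j,k)$ for which the amplitude $\bra{j}U_a\ket{k}$ differs from the corresponding amplitude without the red node. Unlike the qubit case, where a single two-valued test suffices, here $\mathcal{R}$ admits multiple non-trivial red-phase pairs combined with a handful of Clifford tails, so the enumeration must run through the matrix entries readable from (\ref{phasematrix}), (\ref{phasematrix3}), and (\ref{phasematrix2}), and confirm strict inequality in each case. This finite verification is the heart of the argument, while the simplified-pair hypothesis is precisely what enables the clean reduction so that only this single-qutrit discrepancy survives to witness the inequality of the two diagrams.
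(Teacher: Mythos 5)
Your reduction rests on an assumption that the hypotheses simply do not provide: that the two diagrams agree with each other away from the qutrit $a$. A simplified pair of rGS-LC diagrams need not share anything beyond the number of qutrits --- the graphs underlying $D_1$ and $D_2$ can be entirely different, and the vertex operators at the neighbours of $a$ (indeed at every other qutrit) can differ arbitrarily, subject only to the rGS-LC conditions on each diagram separately and to the simplification condition, which merely forbids an unpaired red node of one diagram from being adjacent, in either graph, to an unpaired red node of the other. So your claims that the ``propagated contributions are identical in $D_1$ and $D_2$'', that the localised diagrams have ``identical graph structure and matching vertex operators at the neighbours of $a$'', and that the final ``common scalar multiplier \ldots is the same for both diagrams and is non-zero'' are all unjustified. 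Coping with two structurally unrelated diagrams is precisely the difficulty of this lemma, and your argument assumes it away: once the assumption is dropped, plugging the same effects into both diagrams yields two numbers, each contaminated by uncontrolled (possibly vanishing) factors coming from the rest of the respective diagram, and comparing the single-qutrit amplitudes $\bra{j}U_a\ket{k}$ no longer decides anything about equality of the original states.

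There are also two technical errors inside the reduction itself. Plugging a $Z$-basis effect $\bra{j}$ into a neighbour $b$ of $a$ does not ``collapse the $H$-edge to a scalar multiplier'': pushing $\bra{j}$ through the $H$ (or $H^{\dagger}$) node produces an equatorial green effect, which fuses into the green spider at $a$ and deposits a phase from $\mathcal{A}$ there; likewise, effects plugged outside the neighbourhood do not evaporate into scalars but re-phase \emph{their} neighbours. Consequently what survives at $a$ is not $\bra{j}U_a\ket{k}$ with $k$ freely chosen: the graph state feeds $U_a$ an equatorial ($X$-type) state, never a $Z$-basis state $\ket{k}$, so the family of amplitudes you propose to enumerate is not the family your plugging procedure actually produces. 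A workable argument --- the qutrit adaptation of the qubit proof of Backens that this paper follows --- treats the two diagrams asymmetrically: it uses the rGS-LC condition in $D_1$ (no neighbour of $a$ carries a red node) together with the simplification condition to control what a suitable plugging does in \emph{each} diagram on its own terms, concluding that qutrit $a$ of $D_1$ ends up in an equatorial-type configuration while qutrit $a$ of $D_2$ ends up in a $Z$-eigenstate-type (or vanishing) one, and these can never be proportional; at no point is any structural matching between the two diagrams assumed.
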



\begin{theorem}\label{identical}
The component  diagrams of any simplified pair of rGS-LC  are equal under the standard interpretation if and only if they  are identical. 

\end{theorem}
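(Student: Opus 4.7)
The plan is to establish the nontrivial direction---that equality under $\llbracket\cdot\rrbracket$ implies identity of the two diagrams---by aligning three pieces of information in turn: the positions of red-node vertex operators, the underlying $\mathbb{Z}_3$-weighted graph, and the remaining local phase data at each qutrit. The trivial direction needs no argument.

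The first step is immediate: Lemma \ref{unpaired} forces the sets $R_1, R_2$ of qutrits carrying a red node in their vertex operator to coincide, since otherwise any witnessing qutrit $v \in R_1 \triangle R_2$ would yield unequal diagrams. Combined with the second clause in Definition \ref{rgslcdef}, the common set $R := R_1 = R_2$ is an independent set in both underlying graphs.

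Next I would recover the $\mathbb{Z}_3$-weighted graph structure by computational-basis projections. For each pair of qutrits $u, v \notin R$, I would compose both diagrams with the effect $\bra{0}$ on every other output; by spider fusion (S1), the copy law (B1), and the Hadamard rules, each side collapses to a two-qutrit diagram determined entirely by the edge weight $\Gamma_{uv}$ and the green phases at $u$ and $v$. Comparing these reduced two-qutrit diagrams, whose matrices follow from \eqref{phasematrix} and \eqref{phasematrix3}, pins down both the edge weight and the two green phases. For a pair $u \in R$, $v \notin R$, the simplified condition plus the independence of $R$ forces $v$ to be red-free, so the same projection strategy applies after first absorbing the red vertex operator at $u$ using (EU) and Lemma \ref{controlz}. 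By the first step there is no pair with both $u, v \in R$, so every edge is covered.

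Finally, with the graph and the green phases on qutrits outside $R$ agreed, I would compare the vertex operator at each qutrit $v \in R$ by isolating it (again by projecting all other outputs to $\bra{0}$) and invoking the uniqueness of the single-qutrit Clifford normal form from Theorem \ref{uniformsin}. The main obstacle will be the second stage: ensuring that the projection strategy genuinely separates the local $\{u,v\}$-edge contribution from interference with other edges and phases, and threading a red-node operator at $u \in R$ through the Hadamard connection to its partner $v$ without disturbing the recovered data at $v$. This is precisely where the simplified condition must be invoked with care, mirroring the delicate case analysis carried out for the qubit case in \cite{Miriam1, Miriam2}.
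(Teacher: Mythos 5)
Your first step is correct and is the same opening move the paper intends: Lemma \ref{unpaired} forces the two diagrams to carry red nodes on the same set $R$ of qutrits, and the second clause of Definition \ref{rgslcdef} makes $R$ independent in both graphs. It is also fair in principle to argue semantically (with matrices) here, since the theorem is a statement about the standard interpretation rather than a derivation inside the calculus. However, both of your reduction steps break down exactly when $R\neq\emptyset$, and handling $R\neq\emptyset$ is the entire content of the theorem, not a detail that ``invoking the simplified condition with care'' can be expected to fix.

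Concretely, in step 2 the projection $\bra{0}$ is a clean deletion only at qutrits whose vertex operator is diagonal (green). At a qutrit $w\in R$ the effect $\bra{0}U_w$ has full support, so the projection sums over $x_w$ instead of setting $x_w=0$; computing this sum with \eqref{phasematrix2} (whose rows are discrete Fourier masks) shows it contributes a factor $3\,\omega^{f(\Gamma_{wu}x_u+\Gamma_{wv}x_v)}$ to the coefficient of $\ket{x_u x_v}$, where $f:\mathbb{Z}_3\to\mathbb{Z}_3$ is nonconstant, and the quadratic part of $f$ shifts the effective $uv$-coefficient by a nonzero multiple of $\Gamma_{wu}\Gamma_{wv}$ --- precisely the mechanism by which local complementation alters a graph. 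So if some $w\in R$ is adjacent to both $u$ and $v$, equality of your projected two-qutrit states forces equality of $\Gamma_{uv}$ \emph{plus} red-neighbour corrections, and since the two diagrams' graphs and red phases are not yet known to agree, you cannot conclude $\Gamma^{(1)}_{uv}=\Gamma^{(2)}_{uv}$. Step 3 fails outright: every red phase gate fixes $\ket{+}$, since each row of the matrix in \eqref{phasematrix2} sums to $3$, i.e.\ $X(\alpha,\beta)\ket{+}=3\ket{+}$ for all $\alpha,\beta$. After projecting everything else onto $\bra{0}$, the graph-state part at $v\in R$ collapses to $\ket{+}$ (all neighbours of $v$ are red-free by independence of $R$), so the only data you retain is the state $U_v\ket{+}$, which does not determine $U_v$: distinct vertex operators in $\mathcal{R}$ whose green parts lie in $\mathcal{M}$ but whose red phases differ yield proportional states. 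Theorem \ref{uniformsin} cannot rescue this, because uniqueness of a normal form for \emph{operators} says nothing when all you hold is a single \emph{state} on which many distinct Clifford operators agree; the red data at $v$ lives entirely in correlations with $v$'s neighbours, which your projection destroys.

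This is why the proof the paper relies on (mirroring \cite{Miriam1,Miriam2}) does not compare marginals at all: it uses the transformation machinery of Lemma \ref{equivrgslc} --- local complementations, doubling-neighbour-edge moves, and copying of $\mathcal{M}$-phase red nodes applied identically to both diagrams --- to eliminate paired red nodes while preserving equality and the simplified property, with Lemma \ref{unpaired} guaranteeing that no unpaired red node can arise along the way; only in the resulting red-free situation are graph and phases read off the state, which is the one case where your projection argument is sound. Your proposal never invokes Lemma \ref{equivrgslc}, and the obstacle you defer to the end is exactly the part of the proof that is missing.
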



\subsection{Completeness for qutrit stabilizer quantum mechanics}
To achieve the proof of completeness for qutrit stabilizer QM, we will proceed in two main steps.   Firstly, we show the completeness for stabilizer states. Then by theorem \ref{gslctorgslc} and theorem \ref{identical} we have
\begin{theorem}\label{complst}
The ZX-calculus is complete for pure qutrit stabilizer states.

\end{theorem}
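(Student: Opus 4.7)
The plan is to string together the structural results already established in the excerpt so that any semantic equality between stabilizer state diagrams becomes a syntactic one. Let $D_1, D_2$ be two stabilizer state diagrams on $n$ qutrits with $\llbracket D_1 \rrbracket = \llbracket D_2 \rrbracket$. I want to produce a chain of ZX-rewrites $D_1 \Rightarrow D_1' \Rightarrow D \Leftarrow D_2' \Leftarrow D_2$, where the middle equality is \emph{literally} an equality of diagrams, so that completeness follows from composing rewrites already shown to be sound.

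First, I would apply Theorem \ref{gslctorgslc} to each side: this produces rGS-LC diagrams $\tilde D_1, \tilde D_2$ provably equal to $D_1, D_2$ respectively within the ZX-calculus. By soundness, $\llbracket \tilde D_1 \rrbracket = \llbracket \tilde D_2 \rrbracket$. Next, I would apply Lemma \ref{rgspsym} to the pair $(\tilde D_1, \tilde D_2)$, obtaining a \emph{simplified} pair of rGS-LC diagrams $(\tilde D_1', \tilde D_2')$ with $\tilde D_i \Leftrightarrow \tilde D_i'$ in the calculus, and again $\llbracket \tilde D_1' \rrbracket = \llbracket \tilde D_2' \rrbracket$ by soundness. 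The transformations used in Lemma \ref{rgspsym} involve local complementations and doubling-neighbour-edge operations (Theorems \ref{gandgst1}, \ref{gandgst3} and Corollary \ref{gandgst2}) together with the rewrites of Lemma \ref{equivrgslc}, all of which are genuine derivations inside the qutrit ZX-calculus.

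At this point I may invoke Theorem \ref{identical}: a simplified pair of rGS-LC diagrams is equal under the standard interpretation if and only if the two diagrams are \emph{identical}. Hence $\tilde D_1' = \tilde D_2'$ on the nose as diagrams, and by concatenating the two chains of ZX-rewrites I obtain a derivation $D_1 \Leftrightarrow \tilde D_1 \Leftrightarrow \tilde D_1' = \tilde D_2' \Leftrightarrow \tilde D_2 \Leftrightarrow D_2$ in the qutrit ZX-calculus, as required.

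Since all nontrivial content has already been packaged into Theorems \ref{gslctorgslc}, \ref{identical} and Lemma \ref{rgspsym}, the only work at this final stage is the bookkeeping described above. The main obstacle is therefore not in this theorem itself but in those supporting results; in particular, Theorem \ref{identical} is where the delicate qutrit-specific combinatorics of the local Clifford group $\mathcal{C}_1$ (of order $216$, versus $24$ in the qubit case) has to be controlled. Provided one is careful that every appeal above produces a genuine sequence of rewrites rather than a mere semantic equality, the completeness statement for pure qutrit stabilizer states follows immediately.
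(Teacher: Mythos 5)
Your proposal is correct and follows essentially the same route as the paper: rewrite both state diagrams into rGS-LC form via Theorem~\ref{gslctorgslc}, pass to a simplified pair via Lemma~\ref{rgspsym}, and conclude by Theorem~\ref{identical} that semantic equality forces the simplified diagrams to be identical, whence the equality is derivable by chaining the rewrites. The paper states this theorem as following directly from Theorems~\ref{gslctorgslc} and~\ref{identical}, with the simplification step implicit, so your write-up is just a slightly more explicit version of the same argument.
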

Then we use the following map-state duality to relate quantum states and linear operators:

\begin{equation}\label{mapstate}
\beginpgfgraphicnamed{Qutrits/mapstatedual}
\begin{tikzpicture}
	\begin{pgfonlayer}{nodelayer}
		\node [style=none] (0) at (-6, 0.25) {};
		\node [style=none] (1) at (-6, 0.25) {};
		\node [style=none] (2) at (-6.5, 1.25) {};
		\node [style=none] (3) at (-5.25, -0.25) {};
		\node [style=gn] (4) at (-6.25, -0.25) {};
		\node [style=gn] (5) at (-6, -1) {};
		\node [style=none] (6) at (-5.25, 0.25) {};
		\node [style=none] (7) at (-6.5, 0.25) {};
		\node [style=none] (8) at (-7, 0) {};
		\node [style=none] (9) at (-7, 1.25) {};
		\node [style=none] (10) at (-5, 0.75) {};
		\node [style=none] (11) at (-5, 0.25) {};
		\node [style=none] (12) at (-6.25, 0.25) {};
		\node [style=none] (13) at (-6.25, 0.75) {};
		\node [style=none] (14) at (-5.75, 0.5) {$A$};
		\node [style=none] (15) at (-6, 0.75) {};
		\node [style=none] (16) at (-5.25, 0.75) {};
		\node [style=none] (17) at (-5.25, 1.25) {};
		\node [style=none] (18) at (-6, 1.25) {};
		\node [style=none] (19) at (-5.75, 0) {$\cdots$};
		\node [style=none] (20) at (-5.75, 1) {$\cdots$};
		\node [style=none] (21) at (-6.75, 1) {$\cdots$};
		\node [style=none] (22) at (-4.5, 0) {$=$};
		\node [style=none] (23) at (-1.5, 0.25) {};
		\node [style=none] (24) at (-3.75, -0.25) {};
		\node [style=none] (25) at (-3.75, 0.25) {};
		\node [style=none] (26) at (-1.5, -0.25) {};
		\node [style=none] (27) at (-2.5, 0) {$B$};
		\node [style=none] (28) at (-3.5, 0.75) {};
		\node [style=none] (29) at (-3, 0.75) {};
		\node [style=none] (30) at (-2.25, 0.75) {};
		\node [style=none] (31) at (-1.75, 0.75) {};
		\node [style=none] (32) at (-3.5, 0.25) {};
		\node [style=none] (33) at (-3, 0.25) {};
		\node [style=none] (34) at (-2.25, 0.25) {};
		\node [style=none] (35) at (-1.75, 0.25) {};
		\node [style=none] (36) at (-3.25, 0.5) {$\cdots$};
		\node [style=none] (37) at (-2, 0.5) {$\cdots$};
		\node [style=none] (38) at (0, 0) {$\Longleftrightarrow$};
		\node [style=none] (39) at (2, -0.25) {$\cdots$};
		\node [style=none] (40) at (2.75, 0.5) {};
		\node [style=none] (41) at (1.75, 1) {};
		\node [style=none] (42) at (1.5, 0) {};
		\node [style=none] (43) at (3.25, 0) {$=$};
		\node [style=none] (44) at (2.5, 0.5) {};
		\node [style=none] (45) at (1.75, 0) {};
		\node [style=none] (46) at (1.75, 0.5) {};
		\node [style=none] (47) at (2, 0.75) {$\cdots$};
		\node [style=none] (48) at (1.5, 0.5) {};
		\node [style=none] (49) at (2.5, 1) {};
		\node [style=none] (50) at (2.75, 0) {};
		\node [style=none] (51) at (2.5, 0) {};
		\node [style=none] (52) at (1.75, 0) {};
		\node [style=none] (53) at (2, 0.25) {$A$};
		\node [style=none] (54) at (1.75, -0.5) {};
		\node [style=none] (55) at (2.5, -0.5) {};
		\node [style=none] (56) at (7, 0) {};
		\node [style=none] (57) at (5.5, 0) {};
		\node [style=none] (58) at (6.25, 0) {};
		\node [style=none] (59) at (7, -0.5) {};
		\node [style=none] (60) at (6.75, 0.5) {};
		\node [style=none] (61) at (6.25, 0.5) {};
		\node [style=none] (62) at (5.25, 0.25) {$\cdots$};
		\node [style=none] (63) at (6, -0.25) {$B$};
		\node [style=none] (64) at (6.5, 0.25) {$\cdots$};
		\node [style=none] (65) at (6.75, 0) {};
		\node [style=none] (66) at (5, 0) {};
		\node [style=none] (67) at (4.75, -0.5) {};
		\node [style=none] (68) at (4.75, 0) {};
		\node [style=gn] (69) at (4.75, 0.75) {};
		\node [style=none] (70) at (5, 0) {};
		\node [style=none] (71) at (4.75, 0.25) {};
		\node [style=none] (72) at (4.5, 0.25) {};
		\node [style=none] (73) at (5, 0.25) {};
		\node [style=none] (74) at (4.5, -0.5) {};
		\node [style=gn] (75) at (4.75, 1.25) {};
		\node [style=none] (76) at (5.5, 0) {};
		\node [style=none] (77) at (4, 0.5) {};
		\node [style=none] (78) at (4, -0.5) {};
		\node [style=none] (79) at (5.5, 0.5) {};
		\node [style=none] (80) at (4.25, -0.25) {$\cdots$};
	\end{pgfonlayer}
	\begin{pgfonlayer}{edgelayer}
		\draw [bend left, looseness=1.10] (5) to (8.center);
		\draw [bend right, looseness=1.10] (5) to (3.center);
		\draw (9.center) to (8.center);
		\draw (3.center) to (6.center);
		\draw [bend right, looseness=1.00] (4) to (1.center);
		\draw [bend left, looseness=0.85] (4) to (7.center);
		\draw (0.center) to (1.center);
		\draw (7.center) to (2.center);
		\draw (13.center) to (12.center);
		\draw (13.center) to (10.center);
		\draw (10.center) to (11.center);
		\draw (12.center) to (11.center);
		\draw (18.center) to (15.center);
		\draw (17.center) to (16.center);
		\draw (25.center) to (24.center);
		\draw (25.center) to (23.center);
		\draw (23.center) to (26.center);
		\draw (24.center) to (26.center);
		\draw (28.center) to (32.center);
		\draw (29.center) to (33.center);
		\draw (30.center) to (34.center);
		\draw (31.center) to (35.center);
		\draw (45.center) to (52.center);
		\draw (48.center) to (42.center);
		\draw (48.center) to (40.center);
		\draw (40.center) to (50.center);
		\draw (42.center) to (50.center);
		\draw (41.center) to (46.center);
		\draw (49.center) to (44.center);
		\draw (52.center) to (54.center);
		\draw (51.center) to (55.center);
		\draw (68.center) to (67.center);
		\draw (68.center) to (56.center);
		\draw (56.center) to (59.center);
		\draw (67.center) to (59.center);
		\draw (61.center) to (58.center);
		\draw (60.center) to (65.center);
		\draw [bend left, looseness=1.00] (69) to (73.center);
		\draw [bend right, looseness=0.85] (69) to (72.center);
		\draw (70.center) to (73.center);
		\draw (74.center) to (72.center);
		\draw [bend right, looseness=1.00] (75) to (77.center);
		\draw [bend left, looseness=1.00] (75) to (79.center);
		\draw (78.center) to (77.center);
		\draw (79.center) to (76.center);
	\end{pgfonlayer}
\end{tikzpicture}}
\endpgfgraphicnamed
\end{equation}


Finally we have the main result: 
\begin{theorem}\label{complall}
The ZX-calculus is complete for qutrit stabilizer quantum mechanics.
\end{theorem}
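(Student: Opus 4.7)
The plan is to reduce the general operator case to the state case already handled by Theorem~\ref{complst}, using the map-state duality depicted in~(\ref{mapstate}). Suppose $D_1, D_2 : n \to m$ are two stabilizer diagrams with $\llbracket D_1\rrbracket = \llbracket D_2\rrbracket$. I would first bend all $n$ input wires of each diagram downward by composing with cups (which in the ZX-calculus are just green nodes of appropriate arity, hence stabilizer diagrams with all phase angles zero). This produces two stabilizer state diagrams $\widehat{D}_1, \widehat{D}_2 : 0 \to n+m$, and the soundness of the standard interpretation together with the fact that map-state duality is realized by precisely this bending operation ensures $\llbracket \widehat{D}_1\rrbracket = \llbracket \widehat{D}_2\rrbracket$.

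Next I would apply Theorem~\ref{complst}: since $\widehat{D}_1$ and $\widehat{D}_2$ are stabilizer state diagrams that agree under $\llbracket\cdot\rrbracket$, there is a derivation $\widehat{D}_1 = \widehat{D}_2$ within the ZX-calculus. The final step is to ``unbend'' the wires by composing both sides of this derived equation with caps on the first $n$ outputs. Using the snake equations (which are part of the structural rules of the symmetric monoidal category, instantiated by the $(S2)$ and $(S3)$ rules in Figure~\ref{figure1}), cap composed with cup on a single wire reduces to the identity, so the composite $\textrm{cap}\circ \widehat{D}_i$ rewrites back to $D_i$ within the calculus. Hence $D_1 = D_2$ diagrammatically.

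The bulk of the work has already been done in Theorem~\ref{complst}, and the only thing to verify at this stage is that the cups and caps used are themselves legitimate stabilizer diagrams (so that $\widehat{D}_i$ is still a stabilizer state, making Theorem~\ref{complst} applicable) and that the bending/unbending procedure is fully derivable in the ZX-calculus rather than only semantically valid. Both are straightforward: cups and caps are phase-free green spiders, and the snake equation is a consequence of the compact closed structure presented by $(S2)$, $(S3)$, and spider-fusion. Therefore no genuine obstacle remains at this final step — the essential difficulty of the completeness theorem was concentrated in getting to Theorem~\ref{complst}, and map-state duality provides a clean bridge from states to arbitrary morphisms.
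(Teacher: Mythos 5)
Your proposal is correct and takes essentially the same route as the paper: the paper derives Theorem~\ref{complall} from Theorem~\ref{complst} precisely via the map-state duality~(\ref{mapstate}), i.e.\ by bending inputs into outputs with cups, invoking completeness for stabilizer states, and unbending with caps using the snake equation. The details you verify explicitly (that cups and caps are phase-free green spiders, hence stabilizer diagrams, and that the snake equation is derivable from the spider/compact-structure rules) are exactly the points the paper leaves implicit.
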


\section{Conclusion and further work}

In this paper, we  show that  the qutrit ZX-calculus is complete for pure qutrit stabilizer quantum mechanics, using the same strategy as in the qubit case. 

An obvious next step is to extend the results of qubit ZX-calculus to qutrit ZX-calculus: i.e. give a counter-example to show the incompleteness of ZX-calculus for overall qutrit  quantum mechanics, which can not be obtained by trivially following the results in \cite{Vladimir} because of a lack of Euler decomposition of an arbitrary $3\times 3$ unitary into $Z$ and $X$ phases.  In view of the very recent proof of  completeness of  ZX-calculus for qubit Clifford $+T$ quantum mechanics \cite{Emmanuel}, we hope to achieve the completeness result  in the qutrit case,  where the qutrit version of $T$  gate has  been defined in  \cite{amk} and \cite{Shawn}. Moreover, we would like to generalise the main result of this paper to qudit ZX-calculus for arbitrary dimension $d$.

Another question suggested by Bob Coecke is to embed the qubit ZX-calculus into  qutrit ZX-calculus, which is not obvious since a representation of some special non-stabilizer  phase is involved.
It would also be interesting to incorporate the rules of qutrit ZX-calculus in the automated graph rewriting system Quantomatic  \cite{Quanto} to help for searching for new quantum error-correcting codes  \cite{ckzh}.

 \section*{Acknowledgement}
 The author would like to thank Bob Coecke for the useful discussions and comments. The suggestions and comments given by the anonymous reviewers are also greatly appreciated.
\bibliographystyle{eptcs}
\bibliography{refs}

\end{document}